\DeclareMathOperator*{\Exp}{\mathbb{E}}
\newcommand{\eu}{\bar{u}}
\newcommand{\euo}{\bar{u}^\text{OPT}}
\newcommand{\eup}{\bar{u}^\delta}
\newcommand{\loss}{l}
\newtheorem{theorem}{Theorem}
\newtheorem{lemma}{Lemma}
\newtheorem{proposition}{Proposition}
\newtheorem{definition}{Definition}
\newcommand{\sm}{\textrm{-}}
\newcommand{\smi}{{\sm i}}
\renewcommand{\epsilon}{\varepsilon}
\newcommand{\citet}[1]{\citeauthor{#1} [\citeyear{#1}]}
\let\OLDthebibliography\thebibliography
\renewcommand\thebibliography[1]{
  \OLDthebibliography{#1}
  \setlength{\parskip}{0.18pt}
  \setlength{\itemsep}{0pt plus 0.3ex}
}
\newcommand{\llggridsize}       {600}
\newcommand{\llggridsizeold}       {800} 
\newcommand{\llgMCsamples}      {20,000}
\newcommand{\llgspeedup}        {6.45} 
\newcommand{\llgBNEdistance}    {0.043}
\newcommand{\llgtargeteps}      {$0.001$}
\newcommand{\llllgggridinterval}   {$\frac{1}{32}$}
\newcommand{\llllgggridsizeL}   {33}
\newcommand{\llllgggridsizeG}   {65}
\newcommand{\llllggMCsamples}      {200,000}
\newcommand{\llllggspeedup}     {77}
\newcommand{\llllggtargeteps}   {$0.02$}
\title{Non-decreasing Payment Rules for Combinatorial Auctions}
\author{
Vitor Bosshard\textsuperscript{1} , Ye Wang\textsuperscript{2} \and Sven Seuken\textsuperscript{1}\\
\textsuperscript{1}Department of Informatics, University of Zurich\\
\textsuperscript{2}School of Computer and Communication Sciences, EPFL\\
bosshard@ifi.uzh.ch, ye.wang@epfl.ch, seuken@ifi.uzh.ch
}
\begin{document}

\maketitle

\begin{abstract}
\vspace{-0.25em}
    Combinatorial auctions are used to allocate resources in domains where bidders have complex preferences over bundles of goods. However, the behavior of bidders under different payment rules is not well understood, and there has been limited success in finding Bayes-Nash equilibria of such auctions due to the computational difficulties involved. In this paper, we introduce \emph{\mbox{non-decreasing}} payment rules. Under such a rule, the payment of a bidder cannot decrease when he increases his bid, which is a natural and desirable property. \emph{VCG\nobreakdash-\hspace{0pt}nearest}, the payment rule most commonly used in practice, violates this property and can thus be manipulated in surprising ways. In contrast, we show that many other payment rules are non-decreasing. We also show that a non-decreasing payment rule imposes a structure on the auction game that enables us to search for an approximate Bayes-Nash equilibrium much more efficiently than in the general case. Finally, we introduce the utility planes BNE algorithm, which exploits this structure and outperforms a state-of-the-art algorithm by multiple orders of magnitude.
\end{abstract}

\vspace{-1.5em}
\paragraph{Correction:}
Unfortunately, Proposition~\ref{lem:proportional} is wrong (thanks to Vangelis Markakis and Artem Tsikiridis for helping us spot the error). 
The rest of the paper is only mildly affected.
Please see the corrigendum on our website for details.
\vspace{-0.25em}

\section{Introduction}
\vspace{-0.25em}

Combinatorial auctions (CAs) are commonly used to allocate multiple, indivisible goods to multiple bidders.
CAs allow bidders to express complex preferences on the space of all bundles of goods, taking into account that goods can be complements or substitutes \cite{CramtonEtAl2006CombAuctions}.
CAs have found widespread use in practice, including for the sale of radio spectrum licenses \cite{Cramton2013SpectrumAuctionDesign}, for the procurement of industrial goods \cite{Sandholm2013largescale}, and for the allocation of TV ad slots \cite{goetzendorff2014core}.

Unfortunately, the incentive properties of CAs are not very well understood. One exception, of course, is the well-know VCG auction, which is strategyproof \cite{vickrey1961counterspeculation,clarke1971multipart,groves1973incentives}. However, using VCG in a CA domain can lead to undesirable outcomes, such as very low revenue \cite{ausubel2006lovely} or collusion by bidders \cite{DayMilgrom2008CoreSelectPackageAuctions}.

\subsection{Non-truthful Payment Rules}

Due to VCG's shortcomings in CAs, non-truthful payment rules are used in practice.
The most prominent among these are core-selecting payment rules \cite{DayMilgrom2008CoreSelectPackageAuctions}.
A rule is core-selecting if there is no group of bidders that is envious towards the auction winners. To achieve this, payments must lie in the \emph{core}, which is the set of payments where no coalition of bidders is  willing to pay more than what the auction charged the winners.
The core is a convex polytope and the \emph{minimum revenue core} is the face of the core where the total payment of all bidders is minimized.  The \emph{VCG-nearest} payment rule (or ``Quadratic''), the rule most often used in practice, selects the unique point in the minimum revenue core that minimizes the Euclidean distance to the VCG payment point \cite{DayCramton2012Quadratic}. To date, VCG-nearest has been used in over ten spectrum auctions worldwide, generating more than \$20 billion in revenues \cite{ausubel2017practical}. Core-selecting CAs have also been proposed for use in electricity markets \cite{karaca2018designing} and in sponsored search auctions \cite{hartline2018fast}

Unfortunately, there does not exist a strategyproof core-selecting payment rule \cite{Goeree2013OnTheImpossibilityOfCoreSelectingAuctions}, and relatively little is known about the incentives and strategic behavior of bidders when non-truthful payment rules like \emph{first-price} or \emph{core-selecting rules} are used in CAs. Large settings with many bidders and goods are intractable to analyze, both analytically and computationally. Therefore, core-selecting CAs have mostly been analyzed in stylized settings \cite{AusubelBaranov2013CoreSelectingAuctionsWithIncompleteInformation,ParkesKalagnanamEso2001Achieving}.
Even so, many ways are known in which bidders can manipulate core-selecting CAs in their favor \cite{gretschko2016strategic,beck2013incentives}.

\subsection{Background on BNE Algorithms}

We employ the standard \emph{Bayes-Nash equilibrium (BNE)} concept to analyze the incentives of bidders in a CA. A BNE, in contrast to the related \emph{Nash Equilibrium}, captures the fact that bidders have incomplete information about their rivals' preferences. Bidders have a belief about other bidders' valuations, in the form of random variables drawn from the \emph{value space}.
Bidders submit their own bids, chosen from their \emph{\mbox{action space}}, in response to the expected bids of others.

For the purpose of equilibrium computation, \cite{bosshard2017fastBNE,reeves2004computing} have argued that CAs should be modeled as infinite games, i.e., games with continuous value and action spaces. The argument is that it is actually more difficult to compute equilibria for finite games with a large number of states than for infinite games. However, the infinite setting is very challenging in its own way: the BNE solution concept imposes an equilibrium condition on every valuation of every bidder simultaneously; thus, special care must be taken to ensure the equilibrium condition is met for all possible valuations. While it is possible to apply heuristics in this step (e.g. checking the equilibrium condition at a finite sample of valuations), such heuristic equilibria are difficult to interpret in a standard game-theoretic way. In contrast, the BNEs in the infinite game model are readily interpretable.

The state of the art in this area is the work of \citet{bosshard2017fastBNE}, who describe a fast numerical algorithm for finding approximate BNEs (i.e. $\epsilon$-BNEs) in the infinite setting.
They also point out several pitfalls that should be avoided, under the label of the \emph{false precision problem}, to ensure that the result of a search for $\epsilon$-BNEs yields a correct $\epsilon$.

The combinatorial nature of the problem, together with the incomplete information setting, make BNE computation a very hard, almost intractable problem. Even state-of-the-art algorithms can only solve instances of limited size. By size we mean the number of bidders and goods included in the auction, as well as the number of bundles each bidder bids on.
The latter directly affects the dimensionality of bidders' strategies, because the bids on all different bundles must be jointly optimized by each bidder. Scalability in this dimension is thus particularly difficult.

Despite the high complexity of this problem, advances in algorithmic techniques for BNE computation can still be very beneficial.
First, being able to quickly solve small-to-medium sized instances makes it possible to solve many of them, enabling approaches such as an algorithmic search for payment rules with desirable properties \cite{lubin2018designing}.
Second, understanding small CA instances helps us develop a deeper understanding of the effects that drive strategic behavior in combinatorial auctions.
For instance, the LLG domain has been extensively studied \cite{Goeree2013OnTheImpossibilityOfCoreSelectingAuctions,beck2013incentives,AusubelBaranov2013CoreSelectingAuctionsWithIncompleteInformation,baranov2010exposure}, even though the only combinatorial interaction that arises is two local players needing to cooperate to jointly outbid the global player. There are many more interesting interactions that can already happen in domains such as LLLLGG \cite{bosshard2017fastBNE}, which have a size in-between very stylized/toy domains such as LLG and a fully fledged spectrum auction with hundreds of goods and dozens of bidders. With the advent of more efficient BNE algorithms, the possibility of systematically studying such domains is just beginning to open up.

\subsection{Overview of our Contributions}

In this paper, we aim to find new structural properties of CAs that help us better understand the strategic behavior of bidders in CAs and also allow us to design more effective BNE algorithms. This leads us to introduce and analyze \emph{non-decreasing} payment rules for CAs. Such payment rules cannot decrease a bidder's payment when he increases his bid, unless the allocation changes. In Section~\ref{sec:WI}, we introduce this property formally. We prove that VCG-nearest, the payment rule most commonly used in practice, is not non-decreasing. We also show that non-decreasing rules actually exist, by placing several well-known rules in this class, including strategyproof and core-selecting ones. In Section~\ref{sec:algorithm}, we introduce a new algorithm to compute $\epsilon$-BNEs that exploits the structure of non-decreasing payment rules. For this, we first show how piecewise constant strategies can be used as an auxiliary modeling step in the computation of $\epsilon$-BNEs.
Such strategies create a cell structure that helps us understand the combinatorial interactions when bids are multidimensional. Under a non-decreasing payment rule, this structure can be exploited to find the best response of a bidder to the strategies of all other bidders more efficiently. We assemble these ideas into the \emph{utility planes BNE algorithm} which computes $\epsilon$-BNEs for infinite games, while guaranteeing the correctness of $\epsilon$. In Section~\ref{sec:experiments}, we evaluate the runtime of our algorithm in two different CA domains. We show that it outperforms the state-of-the-art algorithm by multiple orders of magnitude.

\vspace{-0.01in}
\section{Formal Model}
\vspace{-0.02in}

\begin{table*}
    \centering
    \begin{tabular}{|l||ccc|c|c||ccc|c|c|}
    \hline
      & & bids & & VCG & VCG-nearest & & bids$'$ & & VCG$'$ & VCG-nearest$'$\\
      & $\{1\}$ & $\{2\}$ & $\{1,2\}$ & & & $\{1\}$ & $\{2\}$ & $\{1,2\}$ & &\\
    \hline
    \hline
    Bidder \#1 & $4^*$ &     &   & 2 & 3    & $4^*$ &&& 3 & 3.5\\
    Bidder \#2 &       &$4^*$& 5 & 2 & 3    & &$4^*$&  \textbf{7} & 2 & 2.5\\
    Bidder \#3 & 2     & 2   & 6 & 0 & 0    & 2 & 2 & 6 & 0 & 0\\

    \hline
    \end{tabular}
    \caption{Auction where 3 bidders are bidding on 2 goods. Winning bids are marked with an $^*$. Bidder 2 increases his bid on a non-winning bundle (marked in bold), which decreases his payment on his winning bundle, while the allocation remains constant.}
    \label{tab:quadcounter}
\end{table*}

In this section, we introduce CAs with continuous values and bids. We use the well-known independent private values (IPV) model. In contrast to other work, we explicitly handle the occurrence of ties. This is necessary because we deal with piecewise constant strategies and thus cannot assume that ties will occur with probability 0, as would be the case with strictly monotone strategies.

\vspace{-0.02in}
\subsection{Combinatorial Auctions}
\vspace{-0.02in}
A combinatorial auction (CA) is a mechanism used to sell a set $M = \{1, 2, \ldots, m \}$ of goods to a set $N = \{1, 2, \ldots, n \}$ of bidders.
For each bundle of goods $K \subseteq M$, each bidder $i$ has a value $v_i(K) \in \mathbb{R}_{\geq 0}$, and submits a (possibly non-truthful) bid $b_i(K)$.
We assume that each bidder only bids on a limited number of $r$ \emph{bundles of interest} (typically a true subset of all possible bundles).
For a fixed $r$, the bid $b_i$ can thus be represented by a point in the action space $\mathbb{R}_{\geq 0}^{r}$, with bids on all other bundles implicitly being $0$.
The bid profile $b = (b_1, \ldots, b_n)$ is the vector of all bids, and the bid profile of everyone except $i$ is denoted $b_{\smi}$.
The CA has an allocation rule $X(b)$ which always produces an efficient allocation: it maximizes reported social welfare (the sum of all winning bids), by solving what is known as the \emph{winner determination problem}.
Bids can be such that ties occur, i.e. multiple allocations are efficient.
Thus, $X(b)$ is a correspondence (set-valued function), with each $x \in X(b)$ having the same probability of being chosen.
$x_i$ denotes the bundle assigned to $i$ under allocation $x$, possibly the empty bundle.
The CA also has a payment rule $p(b, x)$ which is a function assigning a payment to each bidder based on the bid profile and allocation.
We assume that the payment rule satisfies individual rationality, i.e. $p_i(b, x) \leq b_i(x_i)$.
Each bidder has a quasi-linear utility function $u_i(v_i, b, x) = v_i(x_i) - p_i(b, x)$.
We also assume that the utility functions satisfy free disposal, i.e. a bidder always has weakly positive value for winning additional goods.\footnote{Formally, free disposal requires that for all $K \subseteq K' \subseteq M$, we have that $v_i(K) \leq v_i(K')$.}

\subsection{CAs as Bayesian Games}

We model the process of bidding in a CA as a Bayesian game.
Each bidder knows his own valuation $v_i$, but he only has probabilistic information (i.e. a prior) over each other bidder $j$'s valuation $v_j$, represented by the random variable $V_j$.
The joint prior $V = (V_1, \ldots, V_n)$ is common knowledge and consistent between bidders.
We assume that the $V_i$ are mutually independent.
Each bidder chooses a strategy $s_i$.
We assume that all strategies are pure, i.e. $s_i$ is a function mapping values to bids.
The expected utility of bidder $i$ with value $v_i$ when bidding $b_i$ is given by
\begin{equation}
    \eu_i(v_i, b_i) := \Exp_{b_{\smi}\sim s_{\smi}(V_{\smi})} \left[ \Exp_{x \in X(b)} \left[ v_i(x_i) - p_i(b,x) \right] \right],
    \label{eq:util}
\end{equation}
with the inner expectation corresponding to tie-breaking between efficient allocations.
The amount of utility that a bidder $i$ is ``leaving on the table'' when submitting bid $b_i$ (instead of bidding optimally) is called the utility loss, given by
\begin{equation}
    \loss_i(v_i, b_i) := \sup_{b_i' \in \mathbb{R}_{\geq 0}^{r}} \,\, \eu_i(v_i, b_i') - \eu_i(v_i, b_i).
    \label{eq:loss}
\end{equation}
The expected utility might be discontinuous and not have a maximum, which is why we take the supremum instead. Bidders are in an $\epsilon$-equilibrium when the utility loss is small for all possible valuations of all bidders:

\begin{definition}
    An \textbf{$\varepsilon$-Bayes-Nash equilibrium} ($\epsilon$-BNE) is a strategy profile $s^*$ such that
    $$\forall i \in N, \forall v_i \in V_i : \loss_i(v_i, s_i^*(v_i)) \leq \epsilon.$$
\end{definition}

\section{Non-decreasing Payment Rules}
\label{sec:WI}

\begin{table*}
    \centering
    \begin{tabular}{|l||c|c|c|c||c|c|c|}
    \hline
      & bundle of interest & bid & VCG & VCG-nearest & bid$'$ & VCG$'$ &VCG-nearest$'$\\
    \hline
    \hline
    Bidder \#1  & $\{1\}$     & 5 & 2 & $37 / 12$ & 5 & 1 & $36 / 12$ \\
    Bidder \#2  & $\{2\}$     & 5 & 0 & $16 / 12$ & 5 & 0 & $18 / 12$ \\
    Bidder \#3  & $\{3\}$     & 4 & 1 & $37 / 12$ & \textbf{5} & 1 & $36 / 12$ \\
    Bidder \#4  & $\{4\}$     & 1 & 0 & $\phantom{0} 7 / 12$ & 1 & 0 & $\phantom{0} 6 / 12$ \\
    Bidder \#5  & $\{5\}$     & 1 & 0 & $\phantom{0} 7 / 12$ & 1 & 0 & $\phantom{0} 6 / 12$ \\
    Bidder \#6  & $\{6\}$     & 1 & 0 & $10 / 12$ & 1 & 0 & $12 / 12$ \\
    Bidder \#7  & $\{1,2,4\}$ & 5 &   & $       $ & 5 &   & $       $ \\
    Bidder \#8  & $\{2,3,5\}$ & 5 &   & $       $ & 5 &   & $       $ \\
    Bidder \#9  & $\{1,3,6\}$ & 7 &   & $       $ & 7 &   & $       $ \\
    Bidder \#10 & $\{4,5,6\}$ & 2 &   & $       $ & 2 &   & $       $ \\
    Bidder \#11 & $\{2,3,4\}$ & 5 &   & $       $ & 5 &   & $       $ \\
    \hline
    \end{tabular}
    \caption{Auction where 11 single-minded bidders are bidding on 6 goods. We  show VCG and VCG-nearest payments for each bidder. If bidder 3 increases his bid (marked in bold), this decreases bidder 1's VCG payment, which in turn decreases bidder 3's VCG-nearest payment.}
    \label{tab:quadcounter1}
\end{table*}

In this section, we introduce \emph{non-decreasing} payment rules. Informally, when a bidder increases his bid under such a rule, his payment cannot decrease unless the increased bid causes a change in the allocation. If the allocation stays fixed, then charging a smaller payment for a higher bid is counter-intuitive. However, if the allocation changes, then it might very well be appropriate (in economic terms) to decrease the payment for a higher bid. For example, a bidder might currently be allocated the highly sought-after bundle $K$, and upon increasing his bid on a less-demanded bundle $K'$, his allocation changes from $K$ to $K'$ and his payment decreases.
\begin{definition}
    For any allocation $x$, let $\mathcal{B}_x$ be the set of bid profiles for which $x$ is efficient.
    The payment rule $p(b, x)$ is \textbf{non-decreasing at $\textbf{x}$} if, for all bidders $i$ and bid profiles $b, b' \in \mathcal{B}_x$ with $b_\smi = b'_\smi$, the following holds:
    $$b_i' \geq b_i  \,\, \Rightarrow \,\, p_i(b', x) \geq p_i(b, x).$$
    A payment rule $p(b, x)$ is \textbf{non-decreasing} if it is non-decreasing at all allocations $x$.
\end{definition}
As we will show in Section~\ref{sec:algorithm}, non-decreasing payment rules exhibit a particular structure that we can exploit in the design of BNE algorithms. However, this property is also desirable from an incentive point-of-view. Consider the strategyproof VCG payment rule, which is obviously non-decreasing, because under VCG, bidder $i$'s payment is independent of his bid (and thereby also non-decreasing). Coupled with the welfare-maximizing allocation rule, this gives bidders an incentive to report their true valuations under VCG which leads to high efficiency. In contrast, a rule that is not a non-decreasing rule can be manipulated in surprising ways.

\subsection{The VCG-nearest Rule}

To illustrate this, consider the VCG-nearest payment rule. One justification for using this rule that has been provided in the literature is that VCG-nearest produces outcomes in the minimum revenue core and maximizes incentives for truthful bidding \cite{Cramton2013SpectrumAuctionDesign,DayMilgrom2008CoreSelectPackageAuctions}. It is intuitive that minimizing the distance to VCG should reduce the ``residual incentive to misreport'' \cite{DayCramton2012Quadratic}, but the evidence for this is sparse. Indeed, there are other ways to choose a minimum revenue core payment, and there is some evidence suggesting that VCG-nearest may not be the best \cite{lubin2018designing,bichler2014spectrum}.

As it turns out, VCG-nearest is not a non-decreasing payment rule, violating this property in many different situations.
First, we discuss a simple example, which will also serve as proof for our claim.

\begin{proposition}
    VCG-nearest is not non-decreasing.
\end{proposition}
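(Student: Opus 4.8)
The claim is that VCG-nearest violates the non-decreasing property. To prove this, I need to exhibit a single concrete counterexample: a CA instance together with two bid profiles $b$ and $b'$ that differ only in bidder $i$'s bid, with $b_i' \geq b_i$, such that the same allocation $x$ is efficient under both profiles, yet bidder $i$'s VCG-nearest payment strictly decreases. Since the proposition is a negative existential statement, one clean example suffices — no general argument is needed.

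Let me think about what kind of example to construct. The key phenomenon to exploit is that under VCG-nearest, a bidder's payment depends not only on his own bid but on the entire VCG payment *point* (which is the reference point for the Euclidean projection onto the minimum-revenue core). So there are two natural routes. The first, simpler route (which I would present as the main proof, matching Table~\ref{tab:quadcounter}): have bidder $i$ raise his bid on a *non-winning* bundle. This does not change the efficient allocation $x$ (his winning bundle stays the same, and raising a losing bid keeps it losing), but it can change the VCG payments of *other* bidders — because bidder $i$'s higher losing bid raises the opportunity cost those others impose, hence lowers their VCG discount, hence moves the VCG reference point. The Euclidean projection onto the (possibly also shifted) minimum-revenue core face can then land at a point where $i$'s own coordinate is smaller. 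Concretely, I would take the 3-bidder, 2-good instance in Table~\ref{tab:quadcounter}: bidder 1 wins $\{1\}$ at bid 4, bidder 2 wins $\{2\}$ at bid 4, bidder 3 loses with bid 6 on $\{1,2\}$. When bidder 2 raises his bid on $\{1,2\}$ from 5 to 7, the allocation is unchanged (still bidders 1 and 2 winning their singletons, since $4+4 > 7 > 6$... actually I should double check the WDP, but the table asserts it), the minimum-revenue core constraint from bidder 3 becomes $p_1 + p_2 \geq 6$, and VCG-nearest moves from $(3,3)$ to $(3.5, 2.5)$ — so bidder 2's payment drops from 3 to 2.5. I would verify: (i) the allocation is indeed efficient in both profiles and is the *same*; (ii) the VCG payments are as claimed; (iii) the VCG-nearest point is the Euclidean projection of the VCG point onto the minimum-revenue core face, in both profiles; and (iv) conclude $p_2(b',x) = 2.5 < 3 = p_2(b,x)$ while $b_2' \geq b_2$ and $b_{-2} = b'_{-2}$.

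The main obstacle — really the only one — is the routine but slightly fiddly verification of the numbers: confirming that no other allocation is efficient (so the counterexample is clean and doesn't hinge on tie-breaking), computing the VCG payments correctly (each winner's payment is the externality he imposes, i.e. the drop in others' welfare caused by his presence), and checking that the stated VCG-nearest points are genuinely the closest points in the minimum-revenue core to the VCG point. The minimum-revenue core here is low-dimensional (the relevant binding coalition is $\{3\}$, giving a single constraint $p_1 + p_2 \geq$ const with $0 \le p_i \le b_i(x_i)$), so the projection is just the nearest point on a line segment, and the arithmetic is elementary; but one must be careful that the minimum-revenue *level* (the value of $\sum_i p_i$ on the minimum-revenue face) is itself correct, since raising bidder 2's losing bid changes it. I would also remark — as the paper does via Table~\ref{tab:quadcounter1} — that the violation is robust: it occurs even among single-minded bidders and even when the raised bid belongs to a *different* bidder who is not the one whose payment decreases, indicating the pathology is pervasive rather than a knife-edge artifact. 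A single fully-verified instance, however, is logically sufficient to establish the proposition.
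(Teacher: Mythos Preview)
Your proposal is correct and takes essentially the same approach as the paper: exhibit the 3-bidder, 2-good counterexample of Table~\ref{tab:quadcounter}, where bidder~2 raises his losing bid on $\{1,2\}$ from $5$ to $7$, leaving the efficient allocation unchanged but shifting the VCG reference point from $(2,2)$ to $(3,2)$, so that the projection onto the minimum-revenue core face $p_1+p_2=6$ moves from $(3,3)$ to $(3.5,2.5)$ and bidder~2's payment strictly drops. Your outline of the verification steps (efficiency unchanged, VCG payments, projection arithmetic) is exactly what is needed and matches the paper's table entries; the one small wording slip is that the constraint $p_1+p_2\geq 6$ does not ``become'' $6$ --- it is $6$ under both profiles, and what changes is the VCG point and the constraint $p_1 \geq 3$ arising from bidder~2's raised bid on $\{1,2\}$.
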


\begin{proof}
    Given by counterexample in Table~\ref{tab:quadcounter}, where bidder 2 decreases the payment on his winning bundle $\{2\}$ by increasing his bid on losing bundle $\{1,2\}$ from 5 to 7.
\end{proof}

Note that bidder 2 could also cause his payment to decrease by directly decreasing his bid on $\{2\}$.
In contrast, the manipulation shown in Table \ref{tab:quadcounter} involves an \emph{over-bid}, which can be particularly problematic for the auction process. The CCA, which is the auction format most often used in practice (e.g. for spectrum auctions) has a \emph{clock phase} of several rounds used for price discovery, followed by a final round in which a sealed-bid CA is used. There are activity rules in place that force bidders to be consistent in their revealed preferences: their demand for goods must decrease in response to prices rising over time \cite{Cramton2013SpectrumAuctionDesign}. The manipulation we show here allows bidder 2 to declare high demand for bundle $\{2\}$ while keeping his anticipated payment low. This suggests that there may be ways to circumvent the activity rule by strategically exaggerating one's demand, a topic which should be investigated further.

This first counterexample already shows ways in which the VCG-nearest payment rule can be problematic.
Unfortunately, there exist even more egregious cases.
For instance, a bidder who is single-minded (i.e. who only bids on a single bundle) can sometimes decrease his payment by bidding higher on his bundle, even when he is already winning. VCG-nearest is thus doing the direct opposite of what would be intuitive when receiving an economic signal of higher interest in a bundle.

Consider the situation in Table~\ref{tab:quadcounter1}, where 11 bidders are bidding for 6 goods.
We show two different bid profiles, along with the corresponding auction outcomes.
The only difference between the bid profiles is that bidder 3 has increased his bid from 4 to 5, which causes bidder 1's VCG payment to drop from 2 to 1.
In both situations, the winners are bidders 1-6, and each of the remaining bidders imposes a single core constraint.
The minimum-revenue core is a line segment that fulfills all five of these constraints with equality, yielding a total payment of 9.5.
However, bidder 3's payment under VCG-nearest decreases from $37/12$ to $36/12$.

\subsection{Other Core-Selecting Payment Rules}

We have already seen that VCG is non-decreasing, but it is not a core-selecting rule like VCG-nearest. Fortunately, many well-known core-selecting payment rules are also non-decreasing. This is straightforward to see for the \emph{first-price} payment rule, which is non-decreasing because $i$'s payment is directly proportional to the bid on the bundle he wins, and independent of his bids on other bundles.
We now show that two other core-selecting rules are non-decreasing as well.
The first one is the \emph{proportional} rule, where the payments are on the core boundary and proportional to the winning bids. The second is the \emph{proxy} rule, where we imagine a proxy agent bidding on behalf of each bidder in infinitesimal increments until a point in the core is hit \cite{AusubelBaranov2013CoreSelectingAuctionsWithIncompleteInformation}.
We provide formal definitions of these payment rules in Appendix A of the full version of this paper.

\begin{proposition}
    \label{lem:proportional}
    The proportional and proxy rules are non-decreasing.
\end{proposition}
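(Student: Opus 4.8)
The plan is to view both rules as selecting a core point by following a monotone one-parameter path and stopping the instant the path enters the core: for the \emph{proportional} rule the ray $t\mapsto t\cdot b(x)$ through the winning-bid vector, and for the \emph{proxy} rule the equal-increment ascent used by the ascending proxy auction (each coordinate frozen once it reaches its bidder's winning bid). Fixing an arbitrary allocation $x$, the task is to show that raising $b_i$ to $b_i'\geq b_i$, with $b_\smi=b'_\smi$ and $b,b'\in\mathcal{B}_x$, can only move this stopping point weakly upward in coordinate $i$. First dispose of the trivial case: if $x_i=\emptyset$, individual rationality forces $p_i(b,x)=p_i(b',x)=0$, so assume $i$ is a winner. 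Write $\delta_i(K):=b_i'(K)-b_i(K)\geq 0$, $\delta:=\delta_i(x_i)$, and recall that the core of $(b,x)$ is cut out by $p\geq 0$, $p_j\leq b_j(x_j)$, and $\sum_{j\notin C}p_j\geq\beta(C,b)$ for every coalition $C\subseteq N$, where $\beta(C,b):=V(C,b)-\sum_{j\in C}b_j(x_j)$, $V(C,b)$ is the optimal reported welfare achievable using only the bidders in $C$, and $\beta(C,b)\geq 0$ (since restricting $x$ to $C$ is feasible for $C$). The structural fact I rely on: raising $b_i$ changes $\beta(C,\cdot)$ only for coalitions $C\ni i$ and lifts only the one capacity $b_i(x_i)$.

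For the proportional rule, $p_i(b,x)=\alpha(b)\,b_i(x_i)$, where $\alpha(b)\in[0,1]$ is the least scalar putting $\alpha\,b(x)$ in the core; equivalently $\alpha(b)=\max_{C}\beta(C,b)/D_C(b)$ over coalitions with $D_C(b):=\sum_{j\notin C}b_j(x_j)>0$ (coalitions with $D_C(b)=0$ have $\beta(C,b)=0$). Let $C^\star$ attain this maximum at $b$. If $i\notin C^\star$, then $\beta(C^\star,\cdot)$ is unchanged, the denominator grows from $D:=D_{C^\star}(b)$ to $D+\delta$, and $D\geq b_i(x_i)$ because $i$ lies in the complement; plugging $C^\star$ into the formula at $b'$ gives $\alpha(b')\geq\beta(C^\star,b)/(D+\delta)$, hence $p_i(b',x)\geq\beta(C^\star,b)(b_i(x_i)+\delta)/(D+\delta)\geq\beta(C^\star,b)\,b_i(x_i)/D=p_i(b,x)$, the last step being exactly $D\geq b_i(x_i)$ (and trivial if $\beta(C^\star,b)=0$ or $\delta=0$). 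If $i\in C^\star$, one may additionally assume $C^\star$'s welfare-optimal internal allocation uses $i$, since otherwise $C^\star\setminus\{i\}$ attains the same maximal ratio and reduces to the previous case. Then $D_{C^\star}$ is unchanged and $\beta(C^\star,b')=\beta(C^\star,b)+\Delta V-\delta$ with $\Delta V:=V(C^\star,b')-V(C^\star,b)\geq 0$; a short calculation shows the desired inequality holds as soon as $\Delta V\geq\delta$. So everything hinges on the claim that, for a binding coalition $C^\star\ni i$ that uses $i$, raising $b_i$ increases $V(C^\star,\cdot)$ by at least $\delta_i(x_i)$.

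For the proxy rule the path is different but the skeleton is identical: raising $b_i$ perturbs only the core constraints of coalitions containing $i$ and only lifts $i$'s own freezing level $b_i(x_i)$, so one argues by monotone comparative statics that the first point at which the equal-increment path meets the core moves weakly up in coordinate $i$; again the only delicate case is when $i$ belongs to a core constraint that binds at the stopping point, and one is led to a claim of the same flavor as above.

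I expect the main obstacle, common to both rules, to be exactly that claim. The difficulty is genuinely combinatorial: when $b_i$ is increased --- especially on a bundle other than $x_i$ --- the welfare-optimal sub-allocation inside the binding coalition may reassign $i$ to a different package, so that $i$'s increased bid on $x_i$ is ``absorbed'' into $\sum_{j\in C^\star}b_j(x_j)$ without a matching rise in $V(C^\star,\cdot)$; this relaxes the binding core constraint and can drag $\alpha$ (or the proxy's stopping level) downward faster than $b_i(x_i)$ grows, and then one would instead need some other coalition to take over as the binding one at $b'$. The line I would pursue is to exploit that $x$ stays efficient in the \emph{full} auction under $b'$ --- so $i$ still globally ``prefers'' $x_i$ --- to argue that no coalition can extract more welfare from $i$'s non-$x_i$ bids than from his bid on $x_i$; settling this comparative-statics claim one way or the other is the crux of the proposition.
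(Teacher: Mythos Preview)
Your analysis is considerably more careful than the paper's own argument, and you have correctly located the fatal obstacle. The proposition is in fact \emph{false}: the paper itself carries a correction stating that Proposition~\ref{lem:proportional} is wrong. The paper's short proof handles the change of a single coordinate $b_i(K)$ and splits into three cases according to whether $x_i=\emptyset$, $x_i=K$, or $x_i=K'\neq K$; the error sits in the case $x_i=K$, where the paper asserts that ``the core is the same w.r.t.\ $b$ and $b'$.'' This is exactly the point you isolate. For a coalition $C\ni i$ whose internal optimum does \emph{not} allocate $x_i$ to~$i$, raising $b_i(x_i)$ by $\delta$ increases $\sum_{j\in C}b_j(x_j)$ by $\delta$ but need not raise $V(C,\cdot)$ by as much, so $\beta(C,\cdot)$ strictly drops and the core genuinely enlarges. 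Once that binding constraint relaxes, $\alpha$ can fall faster than $b_i(x_i)$ grows, and the product $p_i=\alpha\,b_i(x_i)$ can decrease. The paper's case~(3) (raising a non-winning bid) is sound, and your treatment of the sub-case $i\notin C^\star$ is correct for the same reason; the breakdown is precisely when $i$ belongs to the binding coalition and is internally assigned a bundle other than $x_i$.

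Your hoped-for escape --- that global efficiency of $x$ at $b'$ forces $\Delta V\geq\delta$ for the binding coalition, or that some other coalition must take over as the new maximizer at $b'$ --- does not hold in general, and it is exactly this comparative-statics claim that admits counterexamples (the analogous issue arises for proxy). So your proposal is not a completed proof, and you candidly say as much; but it is the \emph{right} diagnosis. The paper's argument glides past the very case you flag, and the failure of that case is what makes the statement false. There is no missing idea to supply in your outline: the gap you identify is the gap in the proposition itself.
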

\paragraph{Correction:}
Proposition~\ref{lem:proportional} is wrong.
Please see the corrigendum on our website for details.
\begin{proof}
    We show this for proportional, the argument is analogous for proxy.
    Let $b$ and $b'$ be identical bid profiles, except for $b_i'(K) > b_i(K)$, and let $x$ be an allocation that is efficient under both $b$ and $b'$.
    There are three cases:
    (1) $x_i = \emptyset$, thus $i$'s payment is $0$ by individual rationality.
    (2) $x_i = K$.
    The core is the same w.r.t. $b$ and $b'$.
    Let $\tilde{p}$ be the unique point on the line of payments proportional according to $b'$ with $\tilde{p}_i = p_i(b, x)$. Since $\tilde{p}_j \leq p_j(b,x) \, \forall j \neq i$, $\tilde{p}$ lies weakly below the core, thus $p_i(b',x) \geq \tilde{p}_i = p_i(b,x)$.
    (3) $x_i = K' \neq K$.
    Each core constraint under $b'$ is weakly higher than under $b$, but payments have the same proportion in both cases, so $p(b,x)$ lies weakly below the core w.r.t. $b'$.
    Thus $p_i(b',x) \geq p_i(b,x)$.
\end{proof}

The intuition behind this proof is simple: Many core-selecting payment rules (such as proportional, proxy and vcg-nearest) can be defined through the combination of a reference point and a method to project the payment to the core from the reference point \cite{lubin2018designing}. The core constraints themselves always increase in response to higher bids, and the reference point used for proxy and proportional is the origin. It is therefore sufficient to show that the projection method is non-decreasing, which is what the proof does.

The existence of payment rules which are both core-selecting and non-decreasing suggests re-evaluating the use of VCG-nearest in practice. Another payment rule may bring with it all the advantages of being in the core while being less manipulable and easier to work with for bidders and auctioneers. We leave a more detailed exploration of the space of non-decreasing core-selecting payment rules for future work.

\section{The Utility Planes BNE Algorithm}
\label{sec:algorithm}

In this section, we present our \emph{utility planes BNE algorithm}, a highly efficient BNE algorithm for CAs with non-decreasing payment rules. We will make use of several building blocks. First, and most importantly, we will use an algorithmic trick, whereby our algorithm internally only considers \emph{piecewise constant} strategies, i.e., strategies that consist of a finite number of flat segments. This will imply that only finitely many different bids can occur. This allows us to compute approximate best responses in a highly efficient manner using \emph{utility planes}, as we will explain in Section~\ref{sec:utilplanes}. Note that if such a discretization were applied naively, this could lead to a phenomenon called the \emph{false precision problem} \cite{bosshard2017fastBNE}. This problem arises when an auction game is simplified to make it tractable, an equilibrium is computed in the simplified game, and is then translated back to the original (richer) game without justification. This would lead to underestimating the potential utility loss and thus the $\epsilon$ in the computed equilibrium. Fortunately, the fact that we have non-decreasing payment rules allows us to use a powerful structural result (given as Theorem~\ref{thm:utilbound}) to bound the utility loss, not only in relation to the restricted space of piecewise constant strategies, but with respect to \emph{all possible strategies}. We provide details of this in Section~\ref{sec:structure}.
This enables us to avoid the false precision problem and compute a true $\epsilon$-BNE of the original game. Our full BNE algorithm is presented in detail in Section~\ref{sec:algorithmdetails}.

\subsection{Iterative Best Response Algorithms}

Our BNE algorithm is based on the iterative best response paradigm.
This is a well known approach to finding equilibria of games \cite{brown1951iterative}. Starting at some strategy profile $s$, we repeatedly replace $s$ with a \emph{best response} to $s$, i.e. a strategy profile $s'$ where for each bidder $i$, strategy $s'_i$ maximizes his utility against the previous strategies $s_{\smi}$. While such algorithms usually have no convergence guarantees, they work well in practice and are robust to issues such as only having access to approximate best responses.

\subsection{Computing Best Responses via Utility Planes}
\label{sec:utilplanes}

\begin{figure}
\centering
\includegraphics[width=70mm]{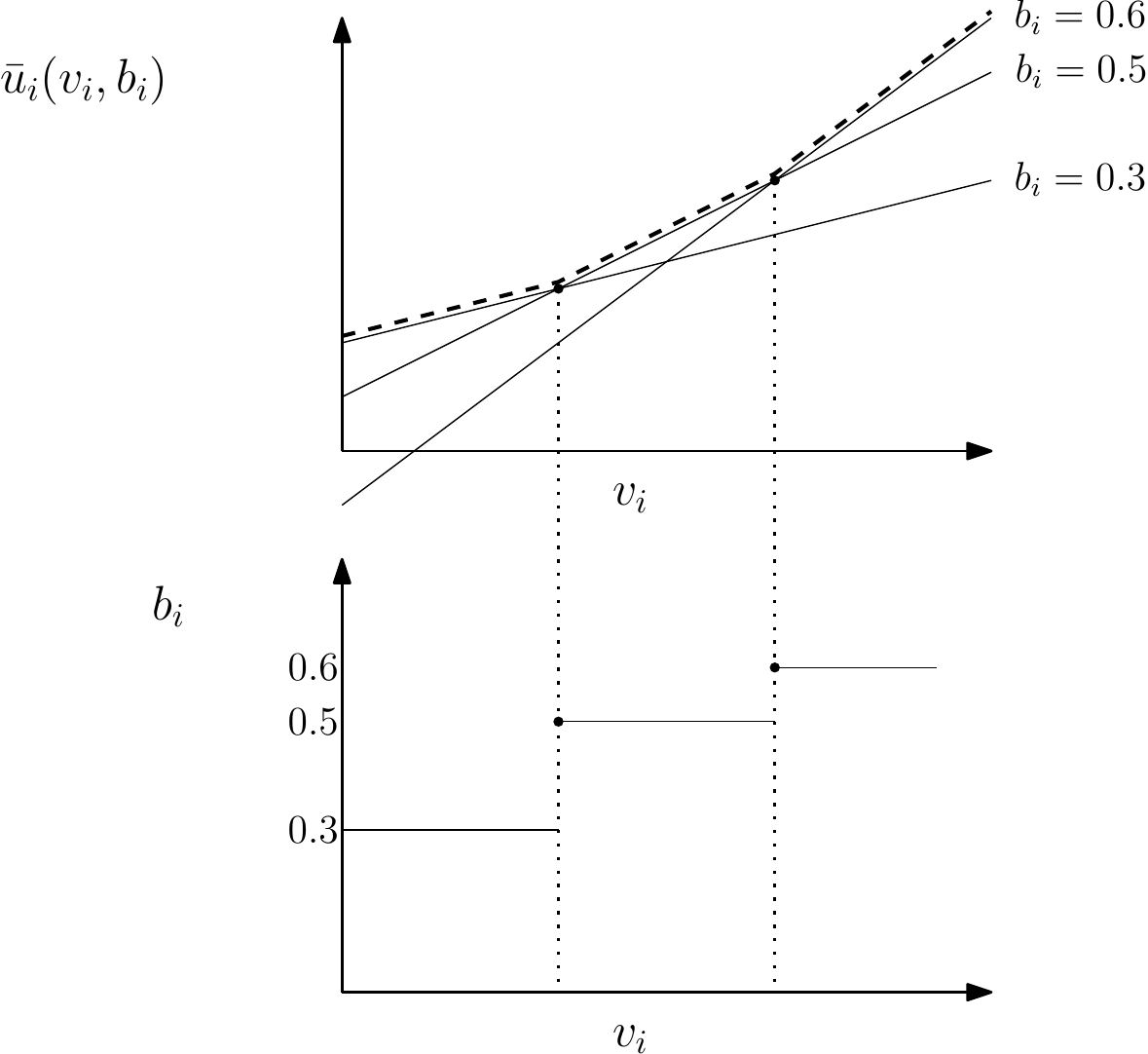}
\caption{Example of utility planes in one dimension. Top: Each possible bid of a bidder generates a utility plane (a line in this case). The upper envelope is marked as a dashed line. Bottom: Piecewise constant strategy induced by the set of utility planes.}
\label{fig:utilplanes}
\vspace{-0.03in}
\end{figure}

Recall that a strategy is a function mapping every valuation $v_i$ to a corresponding bid $b_i = s_i(v_i)$.
If we were to construct a best response to a strategy profile $s$ naively, we would need to find a bid that maximizes the expected utility for each valuation $v_i$ separately.
In fact, this is how \citet{bosshard2017fastBNE} compute best responses.
They find the optimal bid for a finite number of valuations and then interpolate between these \emph{pointwise best responses} to extend the strategy to the entire value space of the bidder.
This ``local'' approach is straightforward but computationally expensive, because the expected utility must be computed for many (value, bid) pairs, without sharing any intermediate results between them.

Fortunately, two of our assumptions (quasi-linear utilities and independently distributed valuations) allow us to do better: we use a ``global'' approach, where we construct the best response strategy $s_i'$ without ever directly computing $s_i'(v_i)$ for any particular $v_i$. Note that for a fixed bid $b_i$, the expected utility $\eu_i(v_i, b_i)$ is linear in $v_i$.
Specifically, $\eu_i$ contains one linear term for each bundle $K$, namely the value $v_i(K)$ times the probability of winning $K$.
The remaining term is the expectation of $-p_i(b, x)$ across all bundles, which is constant in $v_i$.
Thus, $\eu_i(v_i, b_i)$ forms a \emph{utility plane} that maps valuations $v_i$ to expected utility \cite{Rabinovich2013ComputingBNEs}.

For a set of such utility planes,
we can construct their \emph{upper envelope}, which is a piecewise linear function defined as the pointwise maximum of all these planes (Figure~\ref{fig:utilplanes}, top).
The value space is split into regions depending on which plane is the topmost one.
This upper envelope naturally induces the best response, which is a piecewise constant strategy: for each region in the value space, the bid $b_i$ associated to the topmost plane is the one maximizing $i$'s expected utility and is thus part of the best response (Figure~\ref{fig:utilplanes}, bottom).

The problem of computing upper envelopes is well known in computational geometry \cite{edelsbrunner1989upper}.
In our implementation, we avoid constructing the envelope explicitly, because this is expensive and fraught with numerical precision issues.
Instead, we evaluate the upper envelope for a finite set of valuations on a grid, and keep the strategy constant between grid points.
This makes the best response less accurate, but only in a negligible way.

\subsection{Partitioning the Action Space}
\label{sec:structure}

\begin{figure}
\centering
\includegraphics[width=80mm]{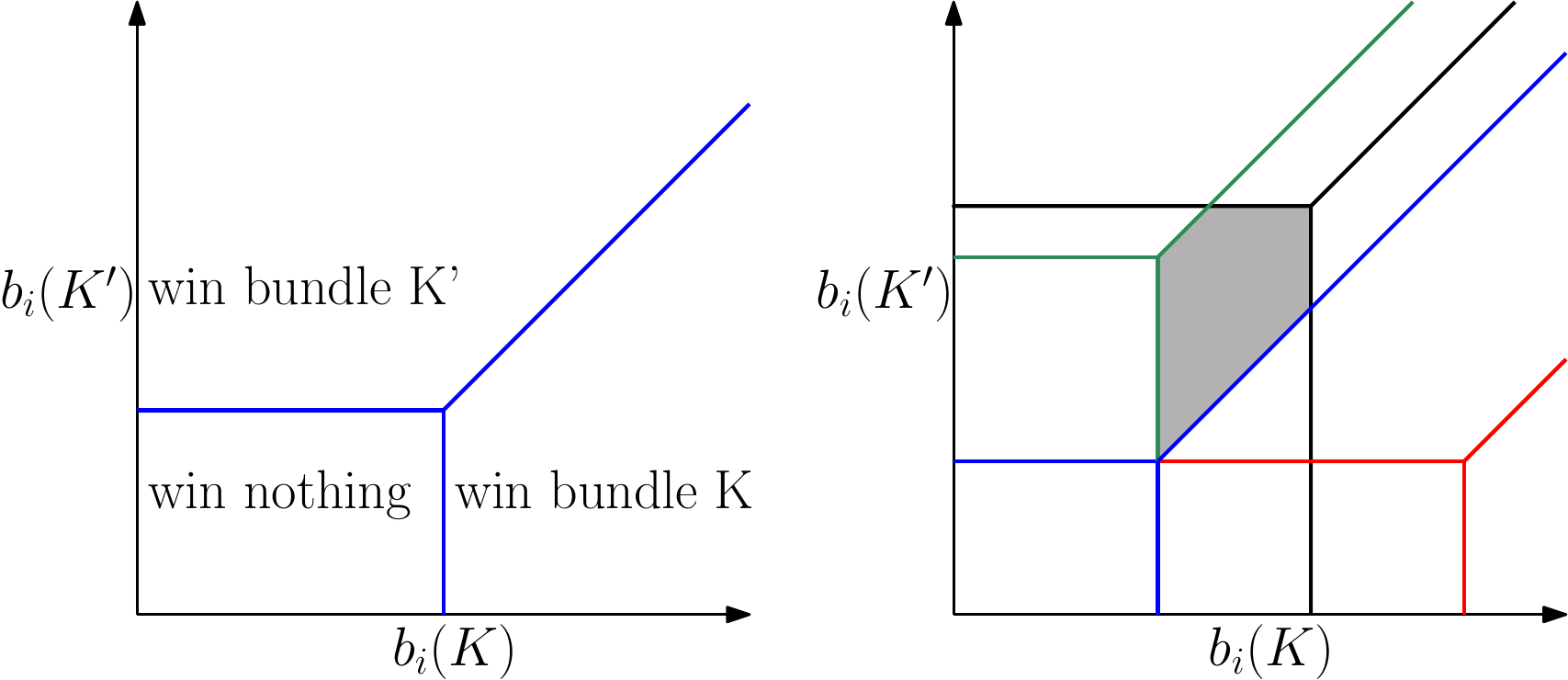}
\caption{Action space of bidder $i$, with two bundles of interest $K$ and $K'$. Left: for fixed bids $b_{\smi}$ of other bidders, the action space is partitioned into three convex polytopes, depending on which bundle bidder $i$ wins. Right: For a probability distribution over finitely many bids $b_{\smi}$ (represented here by black, green, red and blue), the action space is partitioned into cells where the probability of winning each bundle is the same within each cell. For instance, the cell highlighted in gray consists of all bids where $i$ wins nothing in the black case, bundle $K$ in the green case, and bundle $K'$ in the red and blue cases.}
\label{fig:cells}
\vspace{-0.03in}
\end{figure}

Recall that the action space of bidder $i$ is the set of all bids that $i$ is allowed to submit, i.e. $\mathbb{R}_{\geq 0}^{r}$, where $r$ is the number of bundles of interest.
We want to impose a structure on this space that will help us understand which bids we need to consider when computing the best response of $i$ to some strategy profile $s$.
This will also help us determine if $s$ is an $\varepsilon$-BNE.

Consider the situation where the bids $b_{\smi}$ of all other bidders are fixed.
In this case, $i$'s action space is partitioned into $r + 1$ convex polytopes, with $i$'s allocation being different in each polytope  (Figure~\ref{fig:cells}, left).
To explain how this partition arises, we need the concept of a \emph{constrained allocation rule}.
Recall that the allocation rule $X(b)$ maximizes reported social welfare across all bidders.
The constrained allocation rule $X_{\smi}(K, b_\smi)$ also maximizes welfare, but is subject to the constraint that bundle $K$ is allocated to bidder $i$.
Thus, the total reported social welfare achieved by $X_{\smi}(K, b_\smi)$ is a linear function of $b_i(K)$. Now, consider two bundles $K$ and $K'$. We can express the condition that $X_{\smi}(K, b_\smi)$ achieves higher reported social welfare than $X_{\smi}(K', b_\smi)$ as a linear inequality.
Given a bid $b_i$ where $i$ wins bundle $K$ under the (unconstrained) efficient allocation, such a linear inequality must hold for every bundle $K' \neq K$.
Thus, the set of all such bids forms a convex polytope.

Taking this idea further, if we have a probability distribution over finitely many bid profiles $b_{\smi}$ (i.e. when each $s_j$ is a piecewise constant strategy), we can partition the action space of $i$ into many smaller convex polytopes, such that inside each polytope, bidder $i$'s probability of winning each bundle is the same (Figure~\ref{fig:cells}, right). This finer partition can be visualized as intersecting the partitions corresponding to each possible $b_{\smi}$.
We call each of these smaller polytopes a \emph{cell} of bidder $i$'s action space.

Next, we show that for any strategy profile $s$, we can bound the maximum utility that bidder $i$ could possibly obtain by deviating from his current strategy.
To compute this bound, we only need to consider what happens at all cell vertices.
The idea is the following: if we have two different points $b_i \leq b_i'$ that are both strictly inside the same cell, then bidding $b_i$ is always preferable.
This is because the non-decreasing payment rule ensures that the expected payment at $b_i$ is weakly lower than at $b_i'$, while keeping the same distribution over allocations.
Taking this idea to its logical conclusion, it is never optimal to bid anything other than the lowest vertex of a cell.

There is a catch, however: a vertex itself belongs to several cells, so bidding exactly on the vertex would lead to ties between different allocations for bidder $i$, making the auction result different from that of inner points of the cell.
The optimal bid that is representative of a cell is thus strictly in the interior of the cell, just above the lowest vertex.

We cannot determine the payment for such an optimal bid exactly since we make no assumptions about the payment rule except that it is non-decreasing.\footnote{In particular, the payment rule need not be continuous.}
To solve this issue, we can compute the allocation as if $i$ had bid sligthly above the vertex, but collect the payment as if he had bid exactly on the vertex.
This results in an upper bound for the maximum utility that can be obtained anywhere in the cell in question.
Formalizing the above, we introduce the \emph{tie-winning expected utility}
\begin{equation}
    \eup_i(v_i, b_i, \pi) := \Exp_{b_{\smi}} \left[ \Exp_{x \in X(b_i + \delta\pi, b_{\smi})} \hspace{-1.5pt} \left[ v_i(x_i) - p_i(b,x) \right] \right],
\end{equation}
where $\delta$ is a small constant, and the vector $\pi$ corresponds to a permutation of $(1, \ldots, r)$ which moves the bid into one of the cells neighboring $b_i$. The permutation $\pi$ expresses some preference order over bundles in case of self-ties.
Next, we introduce the \emph{$i$-optimal expected utility}
\begin{equation}
    \euo_i(v_i, b_i) := \max_{\pi \in \Pi} \left[ \lim_{\delta \rightarrow 0} \, \eup_i(v_i, b_i, \pi) \right],
\end{equation}
which maximizes over $\Pi$, the set of all such permutations.
With this definition in place, we can state the main technical result of the paper.
\begin{theorem}
    \label{thm:utilbound}
    In a CA with a non-decreasing payment rule and where bidders have piecewise constant strategies, it holds that for every $v_i$, there exists a cell vertex $b_i^*$ such that, for any bid $b_{i}$ in $i$'s action space,
    $$\euo_i(v_i, b_i^*) \geq \eu_i(v_i, b_i).$$
\end{theorem}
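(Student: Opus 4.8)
The plan is to reduce any bid to a ``canonical'' bid just above the coordinatewise-lowest vertex of the cell it lies in, and then take $b_i^*$ to be the lowest vertex of whichever cell maximises the $i$-optimal expected utility. Since the strategies are piecewise constant there are only finitely many cells, each an intersection of finitely many polytopes and hence with finitely many vertices, so this maximum is attained; the bulk of the work is showing that the lowest vertex of a cell dominates every bid inside that cell.

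Two structural facts drive the interior argument. First, on the interior of a cell $C$ bidder $i$'s distribution over allocations is fixed: for each realisation $b_\smi$, the set $\mathcal{E}_C(b_\smi)$ of efficient allocations in which $i$ wins his cell-bundle $K_C(b_\smi)$ is exactly the set of allocations giving the other bidders their constrained-optimal welfare, which does not depend on $b_i$. Hence $\eu_i(v_i,b_i)$ equals a $b_i$-independent value term minus $\Exp_{b_\smi}[\Exp_{x\in\mathcal{E}_C(b_\smi)}[p_i((b_i,b_\smi),x)]]$; since every such $x$ remains efficient throughout $\operatorname{int}(C)$, i.e. $(b_i,b_\smi)\in\mathcal{B}_x$ there, the non-decreasing property applied at $x$ makes the payment term coordinatewise non-decreasing in $b_i$, so $\eu_i(v_i,\cdot)$ is coordinatewise non-increasing on $\operatorname{int}(C)$. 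Second, $\overline{C}$ has a coordinatewise-least element $b_i^C$: every facet of a cell has normal of the form $\pm e_K$ or $\pm(e_K-e_{K'})$ (these are precisely the hyperplanes separating ``$i$ wins $K$'' from ``$i$ wins $K'$'' or ``$i$ wins nothing'', taken over all $b_\smi$ in the support, together with nonnegativity); a polyhedron cut out by such difference/box constraints is closed under coordinatewise minimum and bounded below, hence has a least element, which is a vertex.

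Given these, fix $b_i\in\operatorname{int}(C)$ and set $b_i':=b_i^C+\delta(b_i-b_i^C)$ for small $\delta\in(0,1]$, so that $b_i'\in\operatorname{int}(C)$ and $b_i^C\le b_i'\le b_i$, whence $\eu_i(v_i,b_i)\le\eu_i(v_i,b_i')$. Next pick a permutation $\pi^*\in\Pi$ whose perturbation direction points from $b_i^C$ into $C$, i.e. $\argmax_{K\in T(b_\smi)}\pi^*(K)=K_C(b_\smi)$ for every $b_\smi$, where $T(b_\smi)$ is the set of bundles $i$ is tied for at $b_i^C$. Such a $\pi^*$ exists because the precedence constraints it must satisfy are acyclic: a cycle $K_1\succ K_2\succ\dots\succ K_1$ would force $b_i(K_1)-b_i^C(K_1)>b_i(K_2)-b_i^C(K_2)>\dots>b_i(K_1)-b_i^C(K_1)$ for $b_i\in\operatorname{int}(C)$, a contradiction; and because $b_i^C$ is the least element of $\overline{C}$, the empty bundle is never the winner when it is tied with a bundle of interest, so it poses no obstruction. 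For small $\delta$, $X(b_i^C+\delta\pi^*,b_\smi)$ coincides with $\mathcal{E}_C(b_\smi)$, so $\eup_i(v_i,b_i^C,\pi^*)$, and hence its limit as $\delta\to0$, equals the same value term minus $\Exp_{b_\smi}[\Exp_{x\in\mathcal{E}_C(b_\smi)}[p_i((b_i^C,b_\smi),x)]]$; since $b_i^C\le b_i'$ and each such $x$ is efficient at both profiles, the non-decreasing property gives $p_i((b_i^C,b_\smi),x)\le p_i((b_i',b_\smi),x)$, so $\eu_i(v_i,b_i')\le\eup_i(v_i,b_i^C,\pi^*)\le\euo_i(v_i,b_i^C)$. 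Chaining, $\eu_i(v_i,b_i)\le\euo_i(v_i,b_i^C)\le\euo_i(v_i,b_i^*)$, proving the claim for every $b_i$ in a cell interior.

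The main obstacle is bids lying exactly on cell boundaries, where $i$ faces genuine self-ties and $\eu_i$ averages over several allocations assigning $i$ different bundles. The definition of $\euo$ via the maximum over perturbation directions $\pi$ is tailored to this: one argues that such a self-tie average is dominated by approaching the boundary point from a suitably chosen adjacent cell, which reduces to the interior case --- but one must argue carefully about how the uniform tie-break among efficient allocations interacts with any simultaneous ties among the \emph{other} bidders, since then the induced distribution over $i$'s bundle need not be a cell- or permutation-realisable mixture. Making this last reduction rigorous, rather than the interior argument (which is routine once the lowest-vertex and perturbation-direction facts are established), is where the real work lies.
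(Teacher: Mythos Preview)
Your interior argument and your identification of the least vertex are correct and essentially match the paper's Lemmas (the paper also proves that every cell is a polytope cut out by constraints of the form $b_i(K)-b_i(K')\ge c_{K,K'}$ and $b_i(K)\ge c_K$, and that such a cell has a unique Pareto-minimal point which is a vertex). Where you and the paper diverge is precisely at the point you flag as unfinished: bids on cell boundaries.

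The paper avoids your boundary difficulty by a different decomposition. Instead of going directly from $\eu_i(v_i,b_i)$ to $\euo_i(v_i,b_i^*)$ at a vertex, it inserts $\euo_i(v_i,b_i)$ at the \emph{same} bid $b_i$ as an intermediate quantity and proves
\[
\euo_i(v_i,b_i^*) \;\ge\; \euo_i(v_i,b_i) \;\ge\; \eu_i(v_i,b_i)
\]
as two separate lemmas. The first inequality lives entirely in the $\euo$ world, where every bid is already perturbed into a cell interior by some permutation; this is your interior argument, and boundaries never arise. The second inequality is the piece you are missing, and the paper's trick is to route it through the all-ones perturbation: one shows
\[
\euo_i(v_i,b_i)\;\ge\;\lim_{\delta\to 0}\eup_i(v_i,b_i,\mathbf{1})\;\ge\;\eu_i(v_i,b_i).
\]
The right inequality follows from free disposal: adding $\delta\mathbf{1}$ to $b_i$ makes $i$ win every self-tie against the empty bundle while leaving the relative ordering among $i$'s non-empty bundles unchanged, so $i$'s expected value weakly rises and his payment is evaluated at the original $b_i$. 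The left inequality is obtained by converting $\mathbf{1}$ into a permutation one coordinate at a time, observing that whenever two coordinates of the perturbation vector are equal, breaking the resulting tie in the better of the two ways can only raise the expected value (the uniform tie-break is an average of the two resolutions).

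This decomposition is what makes the boundary case routine rather than ``where the real work lies'': free disposal absorbs the self-tie averaging at $b_i$ in one clean step, so you never have to analyse how uniform tie-breaking over $X(b)$ interacts with ties among the other bidders. Your direct route can presumably be completed, but it is genuinely harder than necessary; the paper's two-step version is both shorter and sidesteps the issue you correctly identified as delicate.
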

\begin{proof}
    See Appendix B of the full version of this paper.
\end{proof}

\subsection{The Algorithm}
\label{sec:algorithmdetails}

With Theorem~\ref{thm:utilbound}, we now have all the pieces in place to build our algorithm.
On a high level, our algorithm works as follows:
We have a strategy profile $s$ consisting of piecewise constant strategies.
Thus, bidder $i$'s action space has finitely many cell vertices and we can construct best response $s_i'$ from the utility planes corresponding to these vertices. $s'_i$ is an approximate best response over the \emph{full action space} of $i$. The strategy profile $s'$ we construct in this way also consists of piecewise constant strategies, so we can repeat this procedure in the next iteration of our iterative best response algorithm.

\begin{algorithm}[tb]
	\KwIn{Auction $(X,p)$, valuations $V$, target $\widetilde{\epsilon}$}
	\KwOut{strategy profile $s$ and overall utility loss $\epsilon$}
	$s^0 = $ Initial strategies\\
	$\text{best\_iteration} = 0$ \\
	\For {$t = 1 \ldots \text{maximum\_iterations}$}
	{
	    $\epsilon^t = 0$
	
		\ForEach {bidder $i$}
		{
			\ForEach {Vertex $b_i$ in bidder $i$'s action space}
			{
				Compute utility plane of $b_i$ w.r.t. $s_{\smi}^{t-1}$, both with random and $i$-optimal tie breaking
			}
			$env_i^t = $ upper envelope with random tie breaking
			
			$env_i^{t,\text{OPT}} \hspace{-2pt} = $ upper envelope with $i$-optimal tie breaking
			
			$s_i^t= $ strategy induced by $env_i^t$
			
			$s_i^{t,\text{OPT}} = $ strategy induced by $env_i^{t,\text{OPT}}$
			
			$Q_i = $ vertices of $env_i^{t, \text{OPT}}$  $\cup$ vertices of $env_i^{t-1}$
			
			\ForEach {$v_i \in Q_i$}
			{
			    $l_i = \euo_i(v_i, s_i^{t,\text{OPT}}(v_i)) - \eu_i(v_i, s_i^{t-1}(v_i))$
			
			    $\epsilon^t = \max \left[ \epsilon^t, l_i \right]$
			}

		}

		\If{$\epsilon^t < \epsilon^{\text{best\_iteration}}$}
		{$\text{best\_iteration} = t$}
	}
	return $(s^{\text{best\_iteration}}, \epsilon^{\text{best\_iteration}})$
	\caption{Utility Planes BNE Algorithm}
	\label{alg:ibr}
	
\vspace{-0.03in}
\end{algorithm}

Our full BNE algorithm is provided in Algorithm~\ref{alg:ibr}.
The iterative best response loop is given in lines 3-22.
In lines 6-8, we compute two sets of utility planes for each bidder: one for the actual expected utility $\eu_i$, and another for the $i$-optimal expected utility $\euo_i$.
In lines 9-10, we compute two upper envelopes, one for each set of utility planes.
Line 11 computes bidder $i$'s best response, i.e. the strategy obtaining the highest possible utility in response to $s_\smi^{t-1}$.
Line 12 computes the best response, assuming that $i$ wins all ties and obtains the $i$-optimal utility.
The utility loss of bidder $i$ is calculated in lines 13-17 by comparing the utility obtained by the current strategy $s_i^{t-1}$ to the utility obtained by the $i$-optimal best response at all vertices of both envelopes.
In lines 19-21, we check if the strategy profile of the last iteration is a more accurate BNE than any previous strategy profile. The next theorem shows that Algorithm ~\ref{alg:ibr} computes a true $\epsilon$-BNE:

\begin{theorem}
    In a CA with a non-decreasing payment rule, Algorithm~\ref{alg:ibr} returns strategy $s$ and $\epsilon$, such that $s$ is an $\epsilon$-BNE.
    \label{thm:algoiscorrect}
\end{theorem}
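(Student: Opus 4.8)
The plan is to prove one invariant and then read off the theorem. The invariant: in every iteration $t$, the value $\epsilon^t$ accumulated in lines 13--17 is a correct upper bound on $\loss_i(v_i,s_i^{t-1}(v_i))$ \emph{simultaneously for every bidder $i$ and every valuation $v_i$} --- not merely for the finitely many valuations inspected inside the loop. Granting this, the theorem follows immediately: the loop runs for the prescribed, finite number of iterations and so terminates; lines 2 and 19--21 keep the iteration whose certified bound is smallest; and the returned pair $(s,\epsilon)$ therefore consists of a strategy profile together with a correct upper bound on its worst-case utility loss, which is exactly what the definition of an $\epsilon$-BNE demands. Essentially all of the work is in the invariant, and within it, in the passage from ``every $v_i$'' to ``$v_i\in Q_i$''.

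To prove the invariant, fix an iteration $t$ and a bidder $i$; every utility handled in that iteration is evaluated against the opponent profile $s_{\smi}^{t-1}$. Because $s_{\smi}^{t-1}$ is piecewise constant it is supported on finitely many bid profiles $b_{\smi}$, so $i$'s action space splits into finitely many cells with finitely many vertices, and lines 6--8 store, for each such vertex $b_i$, the affine utility plane $v_i\mapsto\eu_i(v_i,b_i)$ and the convex, piecewise-linear plane $v_i\mapsto\euo_i(v_i,b_i)$. By construction $env_i^{t,\text{OPT}}(v_i)=\max_{b_i}\euo_i(v_i,b_i)$ over cell vertices $b_i$, and $s_i^{t,\text{OPT}}(v_i)$ is a maximizer, so $\euo_i(v_i,s_i^{t,\text{OPT}}(v_i))=env_i^{t,\text{OPT}}(v_i)$. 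Theorem~\ref{thm:utilbound}, applied with opponent profile $s_{\smi}^{t-1}$, yields for each $v_i$ a cell vertex $b_i^*$ with $\euo_i(v_i,b_i^*)\ge\eu_i(v_i,b_i')$ for every bid $b_i'$; since $b_i^*$ is one of the vertices enumerated in lines 6--8,
\[
\sup_{b_i'}\eu_i(v_i,b_i')\ \le\ \euo_i(v_i,b_i^*)\ \le\ env_i^{t,\text{OPT}}(v_i)\ =\ \euo_i\bigl(v_i,s_i^{t,\text{OPT}}(v_i)\bigr).
\]
Subtracting $\eu_i(v_i,s_i^{t-1}(v_i))$ from both ends shows that the number $l_i$ formed in lines 13--17 obeys $l_i\ge\loss_i(v_i,s_i^{t-1}(v_i))$ at the valuation in question. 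Thus each $l_i$ is a deliberate \emph{over}estimate of a genuine utility loss against $s_{\smi}^{t-1}$ --- this is precisely what averts the false precision problem, since the bound dominates the loss from \emph{arbitrary} deviations, not only from piecewise constant ones.

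It remains to argue that restricting to $v_i\in Q_i$ in lines 13--17 loses nothing, i.e.\ that $env_i^{t,\text{OPT}}(v_i)-\eu_i(v_i,s_i^{t-1}(v_i))\le\epsilon^t$ for \emph{every} $v_i$ (for $v_i\in Q_i$ this is how $\epsilon^t$ arises). Here $v_i\mapsto env_i^{t,\text{OPT}}(v_i)$ is a pointwise maximum of finitely many convex piecewise-linear functions, hence convex and piecewise linear, with breakpoints among the vertices of $env_i^{t,\text{OPT}}$; and $v_i\mapsto\eu_i(v_i,s_i^{t-1}(v_i))$ is piecewise linear because $s_i^{t-1}$ is constant on each region of $env_i^{t-1}$, with breakpoints among the vertices of $env_i^{t-1}$ --- and $Q_i$ is precisely the union of these two vertex sets. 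On each cell of the common refinement of the two subdivisions $env_i^{t,\text{OPT}}$ coincides with a single affine piece and $s_i^{t-1}$ is constant, so the difference is an affine function of $v_i$ there; hence over the (compact) value space the maximum of this piecewise-linear function is attained at a vertex of the common refinement, and --- because $\euo_i$ collects the payment ``on the vertex'' --- it is genuinely attained. In one dimension every such refinement vertex lies in $Q_i$, so the maximum is already witnessed in lines 13--17 and is $\le\epsilon^t$; in dimension $r\ge2$ the common refinement can in principle create vertices that are neither vertices of $env_i^{t,\text{OPT}}$ nor of $env_i^{t-1}$, and one must argue --- invoking the grid evaluation of Section~\ref{sec:utilplanes}, which forces the strategies to be constant on grid cells, together with the precise definition of $Q_i$ --- that no such vertex is missed. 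This is the step I expect to require the most care. Granting it, $\epsilon^t\ge\loss_i(v_i,s_i^{t-1}(v_i))$ for all $i$ and all $v_i$, so $s^{t-1}$ is an $\epsilon^t$-BNE; combined with the minimization in lines 19--21 and the termination noted above, this completes the proof.
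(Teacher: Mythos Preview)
Your argument follows the same skeleton as the paper's: invoke Theorem~\ref{thm:utilbound} to show that $env_i^{t,\text{OPT}}(v_i)$ dominates the best attainable utility at every $v_i$, subtract the utility actually obtained by $s_i^{t-1}$, and then argue that the resulting piecewise-linear function attains its maximum on the finite set $Q_i$. The paper's own proof is much terser --- it simply asserts that ``for any two piecewise linear functions $f$ and $g$, the maximum of $h := g - f$ must occur at a vertex (one of the segment points) of either $f$ or $g$'' and concludes immediately.

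The interesting point is that you have been \emph{more} careful than the paper. The paper's vertex claim is stated as a general fact, but it is only obviously true in one dimension: for $r\ge2$ the common refinement of two polyhedral subdivisions can create vertices that belong to neither original subdivision, exactly as you flag. The paper does not acknowledge this; your hedging (``This is the step I expect to require the most care'') is honest and well placed. To close the gap fully one would appeal to the implementation detail in Section~\ref{sec:utilplanes}, where both envelopes are evaluated on the \emph{same} regular grid in value space and strategies are held constant between grid points --- so both subdivisions are refinements of a common grid and every refinement vertex is a grid vertex, hence lies in $Q_i$ (provided $Q_i$ is understood as this grid). Without that, the paper's proof has the same unaddressed gap you identified.

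Two minor remarks. First, the paper's proof writes $env_i^t$ where you (correctly, per line~13 of the algorithm) use $env_i^{t-1}$; your version is the right one, since it is $s_i^{t-1}$ whose loss is being bounded. Second, you call $v_i\mapsto\euo_i(v_i,b_i)$ ``convex, piecewise-linear'' --- this is accurate (it is a max over finitely many permutations $\pi$, each yielding an affine function of $v_i$), whereas the paper treats each $\euo_i(\cdot,b_i)$ as a single ``utility plane''. Your description is the more precise of the two, and it does not affect the argument: a pointwise maximum of convex piecewise-linear functions is still convex piecewise-linear.
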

\begin{proof}
    It follows from Theorem~\ref{thm:utilbound} that, for all $v_i$, the envelope $env_i^{t, \text{OPT}}(v_i)$ is an upper bound for the utility obtainable at $v_i$ with any bid $b_i$.
    For any two piecewise linear functions $f$ and $g$, the maximum of the function $h := g - f$ must occur at a vertex (one of the segment points) of either $f$ or $g$. Both $env_i^{t, \text{OPT}}$ and $env_i^t$ are piecewise linear functions, and if we subtract them we get an upper bound for the utility loss. It follows that $\epsilon_i$ computed in lines 14-17 of Algorithm~\ref{alg:ibr} bounds the maximum utility loss of bidder $i$.
\end{proof}
\paragraph{Discussion.}
With this theorem, our results are on solid ground:
While our algorithm only considers a finite set of bids (because internally, it only operates with piecewise constant strategies), it measures the $\epsilon$ it achieves against the highest utility achievable by \emph{any } bid, thus avoiding the false precision problem described by \citet{bosshard2017fastBNE}.

The reason why our algorithm works well in practice also becomes clear now: the best responses we compute are almost optimal.
In fact, a bidder loses very little utility by being restricted to bids that are cell vertices, namely the utility lost due to ties.
This amount is very small when the grid resolution is large enough and value distributions are relatively close to uniform.

\subsection{Runtime Analysis}

Computing a utility plane is by far the most expensive part of Algorithm~\ref{alg:ibr}.
Therefore, we evaluate its runtime in the unit cost model, counting the number of times this operation must be performed, and ignoring any other operations (e.g. the computation of the upper envelope).

The algorithm computes one utility plane for each vertex $b_i$ of the cell structure of bidder $i$.
Luckily, the number of such vertices is bounded. If all strategies are piecewise constant, then we have a finite number of distinct bids used by each bidder. Thus, we can always find a step size such that any bid by any bidder on any bundle is an integer multiple of this step size. It turns out that if we build a regular grid with resolution equal to this step size and lay it over the action space of a bidder, then all cell vertices fall exactly on grid points.
\begin{lemma}
    In a CA where bidders have piecewise constant strategies, if it is always the case that for some constant step size $c \in \mathbb{R}_{\geq 0}$, $b_i(K)$ is an integer multipe of $c$, then the coordinates of all cell vertices are also integer multiples of $c$.
\label{lemma:vertexgrid}
\end{lemma}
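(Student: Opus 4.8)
The plan is to show that every hyperplane bounding a cell of bidder $i$'s action space is the zero set of an equation of the form $b_i(K) = d$ or $b_i(K) - b_i(K') = d$ with $d$ an integer multiple of $c$; a cell vertex, being the intersection of $r$ linearly independent such hyperplanes, then automatically has all coordinates in $c\mathbb{Z}$.

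First I would identify the cell-bounding hyperplanes. Fix bidder $i$; under piecewise constant strategies the bid profiles $b_\smi$ of the others range over a finite set $S$. For a fixed $b_\smi \in S$ and a bundle $K$ (allowing $K = \emptyset$, with $b_i(\emptyset) := 0$), the reported social welfare of the constrained allocation $X_\smi(K, b_\smi)$ equals $b_i(K) + W_\smi(K, b_\smi)$, where $W_\smi(K, b_\smi)$ is the optimal value of the winner-determination problem over the bidders $\neq i$ and the goods $M \setminus K$. Since $W_\smi(K, b_\smi)$ does not depend on $b_i$ and is a maximum of sums of bids $b_j(\cdot)$ with $j \neq i$, each an integer multiple of $c$ by hypothesis, we have $W_\smi(K, b_\smi) \in c\mathbb{Z}$. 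The region where $i$ wins $K$ under $b_\smi$ is cut out by the half-spaces $b_i(K) + W_\smi(K, b_\smi) \geq b_i(K') + W_\smi(K', b_\smi)$ over all $K'$; rearranging (and using $b_i(\emptyset) = 0$ when $K$ or $K'$ is empty), each such facet lies on a hyperplane $b_i(K) - b_i(K') = d$ or $b_i(K) = d$ with $d \in c\mathbb{Z}$. A cell is, by construction (Section~\ref{sec:structure}), the intersection over all $b_\smi \in S$ of one such region, together with the action-space constraints $b_i(K) \geq 0$, which are of the same form with $d = 0$. Hence every facet of every cell lies on a hyperplane of the claimed type.

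Next I would carry out the (routine) linear algebra. A cell vertex $b_i^* \in \mathbb{R}^r$ is the unique solution of a subsystem of $r$ linearly independent equations drawn from the above. Model this subsystem by a graph $G$ on node set $\{0, 1, \dots, r\}$, with node $0$ a ``ground'' node of fixed value $0$: an equation $b_i(K) = d$ becomes the edge $\{0, K\}$, and $b_i(K) - b_i(K') = d$ becomes the edge $\{K, K'\}$. Linear independence of the $r$ equations forces $G$ to be acyclic, since any cycle yields a telescoping linear dependence among the corresponding coefficient vectors, so $G$ is a spanning tree on $r + 1$ nodes. Rooting at node $0$ (value $0 \in c\mathbb{Z}$) and propagating outward, each newly reached coordinate is obtained from an already-determined value by adding or subtracting an element of $c\mathbb{Z}$; by induction on the traversal, $b_i^*(K) \in c\mathbb{Z}$ for all $K$, which is exactly the claim.

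The only step that I expect to require care is the first one: one must verify that the constrained-allocation welfare comparisons, together with the trivial non-negativity constraints, are genuinely the \emph{only} source of cell boundaries, so that no hyperplane with an ill-behaved coefficient vector or constant term can slip in. This is precisely the content of the description of a cell as the common refinement of the per-$b_\smi$ partitions into constrained-allocation-optimal regions, combined with the observation that $W_\smi(K, b_\smi)$ is independent of $b_i$. The independence-implies-spanning-tree step and the subsequent propagation are the standard total-unimodularity argument for systems of difference constraints and present no real difficulty.
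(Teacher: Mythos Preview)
Your proposal is correct and takes essentially the same approach as the paper: both arguments identify that the cell-defining equalities have coefficient vectors in $\{0,\pm 1\}$ with right-hand sides in $c\mathbb{Z}$, and then observe that solving such a system requires only addition and subtraction, keeping the solution in $c\mathbb{Z}$. Your spanning-tree formulation is a more careful articulation of what the paper compresses into the single remark that under Gaussian elimination ``we will add and subtract equations, but never multiply them.''
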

\begin{proof}
    Each cell vertex $b_i$ is defined by a linear system of equalities.
    The coefficients of these equalities are always $1$, $0$ or $-1$, so if we perform Gaussian elimination to find $b_i$, we will add and subtract equations, but never multiply them.
    The claim easily follows from the fact that $\{a \cdot c : a \in \mathbb{Z}, c \in \mathbb{R}\}$ is closed under addition.
\end{proof}
Lemma~\ref{lemma:vertexgrid} allows us to bound the amount of work the algorithm does per iteration.
If we choose the initial strategies to only contain bids on a regular grid, then the number of cell vertices does not grow over time, assuming that bidders never want to bid above their maximum value for any bundle.\footnote{For the first-price, proxy and proportional payment rules, we have found this to be the case in our experiments. Proving this property formally is an interesting open problem.}

For instance, if we have a bidder interested in two bundles with values in the $[0, 2]$ interval, choosing a step size of $c = 1/32$ means that we need to compute at most $65^2$ utility lines in each iteration.
More generally, if a bidder has $r$ bundles of interest, with each bundle $K_j$ having value in the interval $[0, v_i^\text{max}(K_j)]$, then all vertices lie on an $r$-dimensional grid with resolution $c$, going from 0 to $v_i^\text{max}(K_j)$ in each dimension.
In that case we need to compute at most $(\frac{1}{c}+1)^r \cdot \prod_{j=1}^r v_i^\text{max}(K_j)$ utility lines for that bidder.

\section{Experimental Results}
\label{sec:experiments}

In this section, we evaluate the performance of our utility planes BNE algorithm. We benchmark it against the algorithm from \citet{bosshard2017fastBNE} in two domains: the single-minded LLG domain and the multi-minded LLLLGG domain.

\subsection{Experiment Setup}

For each domain, we set a target $\epsilon$.
This target is \llgtargeteps{} for LLG, and \llllggtargeteps{} for LLLLGG.
For each algorithm, we measure the runtime required to reach a strategy profile that is proven to be an $\epsilon$-BNE.
The starting strategy profile is one where all bidders bid truthfully.\footnote{The cell structure we describe in Section~\ref{sec:structure} only arises when all bidders play piecewise constant strategies.
Therefore, our BNE algorithm computes an inaccurate $\epsilon$ in the first iteration.
To fix this issue, we make sure that the algorithm runs for at least two iterations.}

Both algorithms are written in Java 8 and share as much of their implementation as possible.
There is often a need to integrate over the bids $b_{\smi}$, e.g. when computing expected utilities or utility planes.
This integration is approximated using Monte Carlo sampling with common random numbers, as described in \cite{bosshard2017fastBNE}.
The number of samples used is \llgMCsamples{} for LLG and \llllggMCsamples{} for LLLLGG.

\subsection{Local-Local-Global (LLG)}

First, we consider the well-known LLG domain, a stylized setting with three bidders, two items and one bundle of interest per bidder.
\citet{bosshard2017fastBNE} measure the average runtime of their BNE algorithm in 16 variations of this setting for which analytical BNEs are known and can be compared against \cite{AusubelBaranov2013CoreSelectingAuctionsWithIncompleteInformation}.
All of the payment rules used are non-decreasing, including VCG-nearest, which is non-decreasing in LLG, even if not in general. 8 of the 16 variations have independent bidder distributions.
To achieve the required $\epsilon$ of \llgtargeteps{}, we employ \llggridsizeold{} verification points for the baseline algorithm, and we compute \llggridsize{} utility lines in our new algorithm.

\vspace{-0.03in}
\paragraph{Handling the Global Bidder.}
One peculiarity of LLG is that it is a dominant strategy for the global bidder to bid truthfully. This is easy to see, because from his perspective, a minimum revenue core-selecting auction is equivalent to a single item second price auction.
This de facto reduces the game to two bidders.
Unfortunately, our BNE algorithm requires all bidders to play piecewise constant strategies, so we cannot simply fix the global bidder to play the truthful strategy.

There are two possible ways to deal with this:
the first option is to force the global bidder to play a piecewise constant strategy, and to include his utility loss in the computation of $\varepsilon$.
The second option is to let the global bidder play truthfully, but bound the utility loss of the local bidders by pretending that the global bidder plays a piecewise constant strategy slightly above or below the truthful strategy.

We do the latter, to keep the runtime comparison between both algorithms as fair as possible.
Specifically, when computing the actual utilities for the local bidders, we let the global bidder bid above truth, but when computing the $i$-optimal utilities, we let him bid below truth.
In LLG, higher bids by the global bidder always lead to a decrease in the utilities of the local bidders.
Our approach thus subtracts an underestimate of the utility obtained from an overestimate of the highest utility that could possibly be obtained.
This produces a correct bound on the utility loss.

\vspace{-0.03in}
\paragraph{Results.}
The average runtimes are shown in the first row of Table~\ref{tab:runtimes}. As one can see, our algorithm has much better performance than the baseline, leading to a speedup of \llgspeedup{}x. Moreover, the equilibria we computed are close to the corresponding analytical BNEs: their $L_\infty$ distance is less than \llgBNEdistance{} in every case.

\newcounter{tempfootnote}
\setcounter{tempfootnote}{\value{footnote}}
\setcounter{footnote}{1}
\renewcommand{\thefootnote}{\alph{footnote}}
\renewcommand*{\thefootnote}{\fnsymbol{footnote}}

\begin{table}
    \setlength\tabcolsep{1.5pt}
    \begin{tabular}{|l||r|r|r|}
    \hline
    Domain \& Rule              & Runtime    & Runtime & Speedup \\
                         & (Baseline) & (Utility Planes)  & Factor \\

    \hline
    \hline
    LLG (8 variations)   &  0.0109    & 0.0017    & 6.45 \\
    \hline
    LLLLGG First Price   &  151.30    & 1.75    & 86.45 \\
    \hline
    LLLLGG Proxy         &  224.10    & 2.91    & 77.01 \\
    \hline
    LLLLGG Proportional\footnotemark  &  155.47    & 1.95    & 79.72 \\
    \hline
    \end{tabular}
    \caption{Runtimes for finding $\epsilon$-BNEs of auctions with non-decreasing payment rules, measured in core-hours. The $\epsilon$ reached is \llgtargeteps{} for LLG and \llllggtargeteps{} for LLLLGG. The baseline uses the algorithm from [Bosshard \emph{et al.}, 2017].}
    \label{tab:runtimes}
\vspace{-0.05in}
\end{table}

\footnotetext[2]{This rule did not converge in the baseline, only reaching \mbox{$\epsilon = 0.026$}. We still include it in the runtime comparison.}

\subsection{LLLLGG}

Next, we test our algorithm in a larger setting, called \mbox{LLLLGG}.
In this setting, we have 6 bidders and 8 items, with each bidder having two bundles of interest.
The bidders are split into two classes: there are 4 local and 2 global bidders, with strategies being symmetric within each class.
The value distributions of all bidders are independent of each other.
For a detailed definition, see \cite{bosshard2017fastBNE}.

To compute $\epsilon$-BNEs for this setting, we use a grid resolution of \llllgggridinterval{}.
Since local bidders' values are drawn from $[0,1]$ and global bidders' values are drawn from $[0,2]$, we compute utility lines corresponding to $\llllgggridsizeL{}^2$ and $\llllgggridsizeG{}^2$ equally spaced bids, respectively.
The baseline algorithm uses a grid of $120^2$ verification points for both the local and global bidders.

\setcounter{footnote}{\value{tempfootnote}}
\renewcommand{\thefootnote}{\arabic{footnote}}

\vspace{-0.03in}
\paragraph{Results.}
We run experiments for three non-decreasing payment rules: First-price, proxy and proportional. The runtimes are shown in Table~\ref{tab:runtimes}.\footnote{
    The runtime of these three rules is orders of magnitudes faster than what \citet{bosshard2017fastBNE} reported for VCG-nearest in their paper. This is expected because VCG-nearest is very expensive to evaluate, needing to compute the VCG point and then solve a quadratic program \cite{DayRaghavan2007FairPayments}.  
}
For all payment rules, the utility planes BNE algorithm finds a BNE over \llllggspeedup{} times faster than the baseline.

\section{Conclusion}

In this paper, we have introduced non-decreasing payment rules, and we have shown that this property has important consequences for incentives and algorithm design.
Importantly, the commonly used VCG-nearest rule is not non-decreasing and enables various kinds of manipulations. Our preliminary analysis suggests that rules that are not non-decreasing can be manipulated via overbidding.
In contrast, we conjecture that overbidding is never favorable in CAs with non-decreasing payment rules.
We have also developed the theory necessary to create the \emph{utility planes BNE algorithm}, which exploits the structural properties of non-decreasing payment rules to search for $\epsilon$-BNEs in a very efficient way. We have empirically found our algorithm to be highly performant, beating a recent state-of-the-art algorithm by multiple orders of magnitude. Thus, our BNE algorithm pushes the boundary on what problem sizes can be analyzed computationally when studying non-truthful payment rules for CAs. Overall, our results suggest that further analytical and algorithmic analysis of non-decreasing payment rules is a promising avenue for future research.

\section*{Acknowlegdements}

This paper is partially based on work done by the second author for his master's thesis at ETH Z\"urich.
We would like to thank our colleagues at ETHZ for making such collaborations possible.

\newpage
\bibliographystyle{named}

\ifthenelse{\boolean{withAppendix}}{
\newpage
\appendix

\section{Definitions of Payment Rules}

Let $W(b, x) = \sum_{i=1}^n b_i(x_i)$ be the reported social welfare achieved by an allocation $x$.
By extension, let $W_\smi(b_\smi, x) = \sum_{j \neq i} b_j(x_j)$ and let $W_L(b_L, x) = \sum_{j \in L} b_j(x_j)$.
Similarly, let $X_L(b_L)$ be the set of allocations $x$ maximizing $W_L(b_L, x)$.

\begin{definition}
    The \textbf{first price payment} is given by 
    $$p_i(b, x) := b_i(x_i).$$
\end{definition}

\begin{definition}
    The \textbf{VCG payment} is given by
    $$p_i(b, x) := W(b_\smi, X_{\smi}(b_{\smi})) - W_\smi(b_\smi, x).$$
\end{definition}

\begin{definition}
    The \textbf{core} is the set of all points $p(b, x)$ fulfilling
    \begin{align*}
        &\forall L \subseteq N: \\
        &\sum_{i \in N \setminus L} p_i(b, x) \geq W_L(b_L, X_L(b_L)) - W_L(b_L, x).
    \end{align*}
\end{definition}

\begin{definition}
    The \textbf{minimum-revenue core} is the set of all points $p(b, x)$ minimizing $\sum_{i \in N} p_i(b, x)$ subject to being in the core.
\end{definition}

\begin{definition}
    The \textbf{VCG-nearest payment} is the unique point $p(b, x)$ minimizing $||p_i(b, x) - q_i(b, x)||_2$ subject to being in the minimum-revenue core, where $q_i(b, x)$ is the VCG payment point.
\end{definition}

\begin{definition}
    The \textbf{proportional payment} is the unique point minimizing $\sum_{i \in N} p_i(b, x)$ subject to being in the core and being of the form
    $$p_i(b, x) := \alpha \cdot b_i(x_i)$$
    for some $\alpha \geq 0$.
\end{definition}

\begin{definition}
    \label{def:proxy}
    The \textbf{proxy payment} is the unique point minimizing $\sum_{i \in N} p_i(b, x)$ subject to being in the core and being of the form
    $$p_i(b, x) := \min \left[ \alpha, b_i(x_i) \right]$$
    for some $\alpha \geq 0$.
\end{definition}

Note that proxy and proportional payments are defined in \cite{AusubelBaranov2013CoreSelectingAuctionsWithIncompleteInformation} for LLG, and we generalize them in a simple way to other auction domains.
Our proxy payments should not be confused with the clock proxy auction, which is a mechanism proposed in the literature \cite{ausubel2002ascending} that reduces the multi-round combinatorial clock auction to a direct revelation mechanism. It achieves this by having each bidder communicate their valuation to a proxy agent, which then engages in ``straightforward bidding'' on their behalf.
We don't analyze the clock proxy auction in the present work, but have opted for the much simpler and computationally tractable payment rule given in Definition~\ref{def:proxy}. While algorithms to compute payments arising from the clock proxy auction have been proposed \cite{wurman2004computing,hoffman2006observations}, their outcome depends on many ambiguous details, such as exactly what straightforward bidding entails, the size of bidding increments, how ties are broken, and how interim bundle prices are computed.

\section{Proof of Theorem 1}

Recall that $X_{\smi}(K, b_{\smi})$ is a constrained allocation, where all goods not part of bundle $K$ are allocated to maximize reported social welfare across all bidders except $i$.

To simplify notation, we abbreviate the expression $b_i(K) + W(b_\smi, X_\smi(K, b_{\smi}))$ as $\tilde{W}(b_i, K)$, with the parameter $b_\smi$ being clear from context.

\begin{definition}
    A \textbf{cell} of bidder $i$'s action space is a connected region $S \subseteq \mathbb{R}_{\geq 0}^{r}$ where for any $b_{\smi}$ with positive probability of occurring in $s_{\smi}(v_{\smi})$, there exists an allocation $x$ that is efficient for each point in $S$, i.e.
    \begin{equation}
        \label{eq:cells}
        \forall b_{\smi} \sim s_{\smi}(v_{\smi}) \, \, \exists x \, \, \forall b_{i} \in S : x \in X(b_i, b_{\smi})
    \end{equation}

\end{definition}

Note that cells don't partition the action space in the strictest sense: a boundary between two cells is part of both cells.
The interiors of the cell are disjoint, however.

\begin{lemma}
    \label{lemma:1}
    A cell is a convex polytope.
\end{lemma}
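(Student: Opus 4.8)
The plan is to show that a cell can be written as a finite intersection of half-spaces. First I would fix a value profile $v_{\smi}$ and enumerate the finitely many bid profiles $b_{\smi}^{(1)}, \ldots, b_{\smi}^{(t)}$ that occur with positive probability under the piecewise constant strategy $s_{\smi}(v_{\smi})$. For each such $b_{\smi}^{(k)}$ and each bundle $K$ among bidder $i$'s $r$ bundles of interest (together with the empty bundle), the quantity $\tilde W(b_i, K) = b_i(K) + W(b_{\smi}^{(k)}, X_{\smi}(K, b_{\smi}^{(k)}))$ is an affine function of the bid vector $b_i$: it is $b_i(K)$ plus a constant depending only on $b_{\smi}^{(k)}$ and $K$ (for $K=\emptyset$ it is simply a constant). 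The efficient allocation for the full profile $(b_i, b_{\smi}^{(k)})$ assigns bidder $i$ some bundle $K$ achieving $\max_{K'} \tilde W(b_i, K')$, so the set of $b_i$ for which a particular bundle $K$ is (weakly) efficient against $b_{\smi}^{(k)}$ is
$$
P_{k,K} := \left\{ b_i \in \mathbb{R}_{\geq 0}^{r} : \tilde W(b_i, K) \geq \tilde W(b_i, K') \text{ for all bundles } K' \right\},
$$
which is an intersection of finitely many affine half-spaces, hence a convex polytope.

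Next I would argue that a cell $S$ is, up to taking its closure, exactly an intersection of such polytopes: by the definition of a cell, for each $b_{\smi}^{(k)}$ there is a single allocation $x^{(k)}$ efficient at every point of $S$, so in particular the bundle $K^{(k)} := x^{(k)}_i$ that $x^{(k)}$ assigns to $i$ is efficient against $b_{\smi}^{(k)}$ throughout $S$. Therefore $S \subseteq \bigcap_{k=1}^{t} P_{k, K^{(k)}} \cap \mathbb{R}_{\geq 0}^{r}$, which is a convex polytope $\mathcal{P}$. Conversely, every interior point of $\mathcal{P}$ satisfies the defining condition \eqref{eq:cells} with the witnessing allocations $x^{(1)}, \ldots, x^{(t)}$, so the interior of $\mathcal{P}$ is contained in a cell; since $S$ is a maximal connected region satisfying \eqref{eq:cells} and $\mathcal{P}$ is connected, $S$ and $\mathcal{P}$ share the same interior, and as noted in the remark before the lemma, cell boundaries are included in the cell, so $S = \mathcal{P}$ is a convex polytope.

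The main obstacle, and the step that needs the most care, is the bookkeeping around boundaries and ties: on the boundary of $\mathcal{P}$ several bundles of $i$ are simultaneously efficient against some $b_{\smi}^{(k)}$, so there may be more than one efficient allocation there, and one must be careful that the cell definition (which only requires the \emph{existence} of one common efficient allocation, not uniqueness) is still met, and that distinct cells overlap exactly on these lower-dimensional faces. A secondary point to handle cleanly is that $\tilde W(\cdot, K)$ uses the welfare $W$ over \emph{all} bidders including $i$'s contribution $b_i(K)$, so one should check that the comparison $\tilde W(b_i,K)\ge \tilde W(b_i,K')$ genuinely captures "$K$ is efficient for $i$", which it does precisely because in an efficient allocation the bundle handed to $i$ is the one maximizing $b_i(\cdot)$ plus the best welfare the remaining bidders can extract from the complement — this is exactly the role of the constrained allocation rule $X_{\smi}(\cdot, b_{\smi})$.
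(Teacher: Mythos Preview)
Your proposal is correct and follows essentially the same approach as the paper: both arguments observe that the condition ``allocation $x$ is efficient at $(b_i, b_{\smi})$'' translates into the linear inequalities $\tilde W(b_i, x_i) \geq \tilde W(b_i, K)$ for all bundles $K$, which reduce to constraints of the form $b_i(K) - b_i(K') \geq c_{K,K'}$ (or $b_i(K) \geq c_K$ when $K'=\emptyset$), so that a cell is cut out by finitely many half-spaces. Your write-up is more explicit than the paper's in enumerating the finitely many $b_{\smi}^{(k)}$ and in arguing the reverse inclusion $S = \mathcal P$ via maximality and boundary handling, but the underlying idea is identical; one small notational slip is that $s_{\smi}(v_{\smi})$ is a single bid (pure strategies), so the finitely many $b_{\smi}^{(k)}$ arise from varying $v_{\smi}$ over the support of $V_{\smi}$, not from randomness at a fixed $v_{\smi}$.
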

\begin{proof}
    It is clear that $x \in X(b_i, b_\smi)$ is equivalent to
    \begin{align}
        \forall K \neq x_i : \tilde{W}(b_i, x_i) \geq \tilde{W}(b_i, K),
        \label{eq:welfareconstraints}
    \end{align}
    where the quantifier over $K$ includes all $r$ bundles of interest of bidder $i$, plus the empty bundle.
    Furthermore, note that
    \begin{align*}
         \tilde{W}(b_i, K) &\geq \tilde{W}(b_i, K') \\\quad &\Leftrightarrow\\
         b_i(K) - b_i(K') &\geq W(b_\smi, X_\smi(K', b_\smi)) - W(b_\smi, X_\smi(K, b_\smi))\\
         &\Leftrightarrow\\
         b_i(K) - b_i(K') &\geq c_{K, K'}
    \end{align*}
    where $c_{K,K'}$ is a constant independent of $b_i$.
    If $K'$ is the empty bundle, then the inequality simplifies to
    $$ b_i(K) \geq c_K.$$
    A cell is fully defined by these two types of inequalities.
\end{proof}

A pareto point of a cell is a point $b_i$ such that it's not possible to strictly decrease one coordinate of $b_i$ and weakly decrease all others, and still remain in the cell.\footnote{Note that this is a pareto point in the geometric sense, not the game-theoretic sense, though the terminology is of course related.}
\begin{lemma}
    \label{lemma:3}
    A cell has a unique pareto point, which is a vertex.
\end{lemma}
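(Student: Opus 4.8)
The plan is to exhibit the coordinatewise-smallest point $b_i^*$ of the cell $S$, show that it is the unique Pareto point, and show that it is a vertex. The one structural fact that makes everything work is that $S$ is closed under the coordinatewise minimum operation $\wedge$, so I would establish that first.

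By the proof of Lemma~\ref{lemma:1}, $S$ is cut out by inequalities of only two kinds: pure bounds, namely $b_i(K) \ge c_K$ and, for any profile $b_\smi$ under which $i$ wins nothing, $b_i(K) \le c_K'$; and difference constraints $b_i(K) - b_i(K') \ge c_{K,K'}$ (the nonnegativity constraints $b_i(K)\ge 0$ are of the first kind). Each kind is preserved under $\wedge$. Pure bounds are trivial: if $b,b'\in S$ then $(b\wedge b')(K)=\min(b(K),b'(K))$ is $\ge c_K$ when both are, and $\le c_K'$ when $b(K)$ is. For a difference constraint, assume w.l.o.g.\ $b(K)\le b'(K)$, so $(b\wedge b')(K)=b(K)$ and $(b\wedge b')(K') = \min(b(K'),b'(K')) \le b(K')$; hence $(b\wedge b')(K) - (b\wedge b')(K') = b(K) - \min(b(K'),b'(K')) \ge b(K)-b(K') \ge c_{K,K'}$. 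Therefore $b\wedge b'\in S$ whenever $b,b'\in S$, and by induction the coordinatewise minimum of any finite subset of $S$ lies in $S$.

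Next I would build the candidate. Since $S$ is a nonempty polyhedron bounded below coordinatewise, for each bundle of interest $K$ the value $m_K := \min_{b\in S} b(K)$ is attained, say at $p^{(K)}\in S$. Put $b_i^* := \bigwedge_K p^{(K)}$, which lies in $S$ by the closure property; for every $K'$ we have $m_{K'} \le b_i^*(K') \le p^{(K')}(K') = m_{K'}$, so $b_i^*(K') = m_{K'}$ for all $K'$, i.e.\ $b_i^*$ is the coordinatewise minimum of $S$ and is a genuine element of $S$. Uniqueness of the Pareto point is then immediate: if $b\in S$ satisfies $b\le b_i^*$ then $b(K)=m_K$ for all $K$, so $b=b_i^*$, whence $b_i^*$ is Pareto; and for any Pareto point $b$, the point $b\wedge b_i^*\in S$ satisfies $b\wedge b_i^*\le b$, so by the definition of Pareto point $b\wedge b_i^* = b$, giving $b\le b_i^*$, while $b\ge b_i^*$ holds automatically, hence $b=b_i^*$. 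Finally, $b_i^*$ is a vertex: if $b_i^* = \tfrac{1}{2}(p+q)$ with $p,q\in S$ and $p\ne q$, then some coordinate $K$ has (say) $p(K) < b_i^*(K) = m_K$, contradicting minimality of $m_K$; so $b_i^*$ is an extreme point of the polytope $S$.

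I expect the only real work to be the closure-under-$\wedge$ step, in particular getting the case split for the difference constraints right and being sure I have correctly listed the forms of the defining inequalities (which I borrow from the proof of Lemma~\ref{lemma:1}); after that the Pareto-uniqueness and vertex claims fall out in a few lines. I would also spell out the two harmless standing assumptions used above: that a cell is nonempty, and that each per-coordinate minimum over $S$ is attained, which holds because a cell is a polyhedron bounded below in every coordinate.
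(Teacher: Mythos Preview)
Your proof is correct and rests on the same key idea as the paper's---that the coordinatewise minimum of two points in a cell stays in the cell---but you establish this fact by a different route. The paper argues by contradiction at the level of welfare: assuming two Pareto points, it forms their coordinatewise minimum $b_i''$, observes that $b_i''$ must lie outside the cell (else it would dominate both), and then derives a contradiction from the inequalities $\tilde W(b_i,x_i)\ge \tilde W(b_i,K)$ and $\tilde W(b_i',x_i)\ge \tilde W(b_i',K)$ together with $\tilde W(b_i'',K)=\min(\tilde W(b_i,K),\tilde W(b_i',K))$. You instead prove closure under $\wedge$ directly from the syntactic form of the defining constraints (pure bounds and difference inequalities), and then constructively exhibit the unique Pareto point as the coordinatewise infimum of the cell. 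The vertex arguments also differ: the paper reasons about directions inside a face, whereas you use the extreme-point characterization. Your approach is a bit cleaner and makes the lattice structure of a cell explicit; its only cost is that it leans on a complete enumeration of the constraint types from Lemma~\ref{lemma:1} (including the upper-bound case $x_i=\emptyset$, which you correctly add though the paper's proof of Lemma~\ref{lemma:1} does not spell it out).
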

\begin{proof}
    Assume that we have two pareto points $b_i$ and $b_i'$ of the same cell $S$.
    Consider the point $b_i''$, defined as the coordinatewise minimum between these two points, i.e.
    $$\forall K : b_i''(K) := \min(b_i(K), b_i'(K)),$$
    which implies that
    \begin{align}
        \label{eq:p4} &\forall K : \tilde{W}(b_i'', K) = \min( \tilde{W}(b_i, K), \tilde{W}(b_i', K) ).
    \end{align}
    Since $b_i''$ pareto dominates both $b_i$ and $b_i'$, it must be contained in a cell other than $S$ (otherwise it would contradict our assumption that $b_i$ and $b_i'$ are pareto points of $S$).
    It follows from (\ref{eq:cells}) and (\ref{eq:welfareconstraints}) that there exists some bid of other bidders $b_\smi$ and an allocation $x$ such that
    \begin{align}
        \label{eq:p1} &\forall K \neq x_i : \tilde{W}(b_i, x_i) \geq \tilde{W}(b_i, K),\\
        \label{eq:p2} &\forall K \neq x_i : \tilde{W}(b_i', x_i) \geq \tilde{W}(b_i', K),\\
        \label{eq:p3} &\exists K \neq x_i : \tilde{W}(b_i'', x_i) < \tilde{W}(b_i'', K).
    \end{align}
    Chaining together (\ref{eq:p3}), (\ref{eq:p4}) and (\ref{eq:p1}) we get that 
    \begin{align*}
        &\tilde{W}(b_i'', x_i) < \tilde{W}(b_i'', K) \leq \tilde{W}(b_i, K) \leq \tilde{W}(b_i, x_i).
    \end{align*}
    Analogously, from (\ref{eq:p3}), (\ref{eq:p4}) and (\ref{eq:p2}) we get that
    \begin{align*}
        &\tilde{W}(b_i'', x_i) < \tilde{W}(b_i'', K) \leq \tilde{W}(b_i', K) \leq \tilde{W}(b_i', x_i),
    \end{align*}
    which is a contradiction to (\ref{eq:p4}).
    
    It remains to show that the unique pareto point $b_i$ is a vertex.
    If $b_i$ lies strictly in the interior of some face $F$, there exists at least one direction $d$ such that the points $b_i + \epsilon \cdot d$ and $b_i - \epsilon \cdot d$ are also in the interior of $F$ (for small enough $\epsilon$).
    If d has only positive or negative coordinates, then the point $b_i - \epsilon \cdot d$ respectively $b_i + \epsilon \cdot d$ pareto dominates $b_i$.
    If d has a mix of positive and negative coordinates, then $b_i + \epsilon \cdot d$ neither dominates nor is dominated by $b_i$, so it is either a pareto point, or dominated by some other pareto point.
\end{proof}

\begin{lemma}
    \label{lemma:utilbound}
    In a CA with a non-decreasing payment rule and where bidders have piecewise constant strategies, we have that for any $v_i$, there exists a cell vertex $b_i^*$ such that for any bid $b_{i}$ in $i$'s action space
    $$\euo_i(v_i, b_i^*) \geq \euo_i(v_i, b_i).$$
\end{lemma}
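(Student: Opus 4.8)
The plan is to fix $v_i$ together with an arbitrary bid $b_i$, pin down the single cell $C$ whose allocations determine $\euo_i(v_i,b_i)$, and show that the unique pareto vertex $b_i^*$ of $C$ does at least as well. First I would unpack $\euo_i(v_i,b_i)$: let $\pi^*$ attain the maximum over permutations in its definition. Since the entries of $\pi^*$ are pairwise distinct and positive, $\pi^*$ is orthogonal to none of the finitely many hyperplanes bounding the cells (each has a normal of the form $e_K-e_{K'}$ or $e_K$), so for all sufficiently small $\delta>0$ the point $b_i+\delta\pi^*$ lies in the interior of one fixed cell $C$ (necessarily full-dimensional). Writing $x^{b_\smi}$ for the allocation efficient throughout $C$ for a given $b_\smi$, and using that on the interior of a full-dimensional cell the set of efficient allocations does not depend on $b_i$ (two allocations giving $i$ the same bundle are ranked by $b_\smi$ alone), this yields $\euo_i(v_i,b_i)=\Exp_{b_\smi}\bigl[\,v_i(x_i^{b_\smi})-p_i((b_i,b_\smi),x^{b_\smi})\,\bigr]$; also $b_i\in C$, since cells are closed.

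Next I would produce $b_i^*$ and move it back into $C$. By Lemma~\ref{lemma:1}, $C$ is a polytope defined only by inequalities of the forms $b_i(K)\ge\alpha_K$, $b_i(K)\le\beta_K$ and $b_i(K)-b_i(K')\ge\gamma_{K,K'}$, and by Lemma~\ref{lemma:3} it has a unique pareto vertex $b_i^*$; since such an inequality system is closed under the coordinatewise minimum, $b_i^*$ is in fact the coordinatewise-least point of $C$, so $b_i^*\le b_i'$ for every $b_i'\in C$, in particular $b_i^*\le b_i$. I then need a permutation $\pi'$ that enters the interior of the \emph{same} cell $C$ starting from $b_i^*$. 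Since $b_i+\delta\pi^*\in\mathrm{int}(C)$, the cell is full-dimensional; taking an interior point $b_i^\circ$ and setting $d:=b_i^\circ-b_i^*$ gives a strictly positive vector (because $b_i^*$ is coordinatewise minimal) lying in the interior of the tangent cone $T$ of $C$ at $b_i^*$. Strict positivity of $d$ forbids any constraint $b_i(K)\le\beta_K$ from being tight at $b_i^*$ (its homogenization $d_K\le 0$ contradicts $d_K>0$), so $T$ is cut out purely by inequalities $d_K\ge 0$ and $d_K\ge d_{K'}$; the tight instances of the latter, being strictly satisfiable by $d$, contain no directed cycle, hence admit a topological order realised by a permutation $\pi'$ of $(1,\ldots,r)$ that lies in $\mathrm{int}(T)$. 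Then $b_i^*+\delta\pi'\in\mathrm{int}(C)$ for small $\delta$, which induces the same allocations $x^{b_\smi}$, so $\euo_i(v_i,b_i^*)\ge\Exp_{b_\smi}\bigl[\,v_i(x_i^{b_\smi})-p_i((b_i^*,b_\smi),x^{b_\smi})\,\bigr]$.

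Finally I would combine the two displays. For each $b_\smi$, the profiles $(b_i,b_\smi)$ and $(b_i^*,b_\smi)$ both lie in $\mathcal B_{x^{b_\smi}}$, agree outside coordinate $i$, and satisfy $b_i^*\le b_i$, so the defining property of a non-decreasing payment rule gives $p_i((b_i^*,b_\smi),x^{b_\smi})\le p_i((b_i,b_\smi),x^{b_\smi})$; plugging this into the bound for $\euo_i(v_i,b_i^*)$ gives $\euo_i(v_i,b_i^*)\ge\euo_i(v_i,b_i)$, as required. I expect the middle step — exhibiting a permutation direction that enters the interior of $C$ at its pareto vertex — to be the main obstacle, since it is the single place where full-dimensionality of the relevant cell and the special $0,\pm 1$ coefficient structure of the cell inequalities (Lemma~\ref{lemma:1}) genuinely enter; the remainder is bookkeeping together with one appeal each to Lemma~\ref{lemma:3} and to the non-decreasing property.
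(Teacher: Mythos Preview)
Your proposal is correct and follows essentially the same route as the paper: pick the optimising permutation at $b_i$, land in the interior of a cell $C$, move to the unique pareto vertex $b_i^*$ of $C$, re-enter $C$ from $b_i^*$ via some permutation, and conclude by the non-decreasing property together with $b_i^*\le b_i$. The only substantive difference is that the paper simply \emph{asserts} the existence of a permutation $\pi^*$ with $b_i^*+\delta\pi^*\in\mathrm{int}(C)$, whereas you supply an explicit construction via the tangent cone and a topological order on the tight difference constraints; this fills a gap the paper leaves implicit but does not change the strategy.
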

\begin{proof}
    Let $\pi$ be the permutation maximizing $\euo_i(v_i, b_i)$.
    For $\delta \rightarrow 0$, $b_i + \delta \pi$ is in the interior of some cell $S$, because all the coordinates of $\pi$ are distinct, and thus $\pi$ is not parallel to any of the constraints given by (\ref{eq:welfareconstraints}).
    Let $b_i^*$ be the pareto point of $S$, and $\pi^*$ a permutation such that $b_i^* + \delta \pi^*$ is also in the interior of $S$.
    Player $i$'s probability of winning each bundle is identical for $b_i^* + \delta \pi^*$ and $b_i + \delta \pi$, but the expected payment is weakly smaller for $b_i^*$ than for $b_i$, because the payment rule is non-decreasing.
\end{proof}

\begin{lemma}
    \label{lemma:utilbound2}
    In a CA with a non-decreasing payment rule and where bidders have piecewise constant strategies, for any value $v_{i}$ and bid $b_{i}$ in $i$'s action space, we have that
    $$\euo_i(v_i, b_i) \geq \eu_i(v_i, b_i).$$
\end{lemma}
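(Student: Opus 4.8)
My plan is to prove the inequality by conditioning on the realized bids $b_{\smi}$ of the other bidders, pinning down exactly which allocation the perturbed winner determination selects in the limit $\delta\to 0$, and then converting the maximum over permutations into a symmetrization argument. The key observation is that both $\eu_i(v_i,b_i)$ and $\lim_{\delta\to0}\eup_i(v_i,b_i,\pi)$ are built, conditionally on $b_{\smi}$, out of the \emph{same} quantities: for a bundle $K$, let $\overline g_K(b_{\smi})$ be the average of $v_i(x_i)-p_i((b_i,b_{\smi}),x)$ over the efficient allocations $x\in X(b_i,b_{\smi})$ that assign $K$ to bidder $i$. Then the conditional value of $\eu_i$ is the convex combination $\sum_K w_K(b_{\smi})\,\overline g_K(b_{\smi})$, where $w_K$ is the fraction of efficient allocations giving $i$ the bundle $K$, and the task is to show that one of the permutation-branches does at least as well as this convex combination, uniformly after integrating over $b_{\smi}$.

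First I would establish the limiting allocation. Fixing $b_{\smi}$ and $\pi\in\Pi$, for small $\delta>0$ adding $\delta\pi$ to $b_i$ cannot make a previously inefficient allocation efficient, so $X(b_i+\delta\pi,b_{\smi})\subseteq X(b_i,b_{\smi})$, and it retains exactly those efficient allocations that give $i$ the bundle $K$ maximizing $\pi_K$ among the bundles $i$ can receive in some efficient allocation. Since the coordinates of $\pi$ are distinct and strictly positive while the empty bundle carries no perturbation, this maximizing bundle $K^\pi(b_{\smi})$ is unique. Because the payment in $\eup_i$ is still evaluated at the \emph{unperturbed} profile $b=(b_i,b_{\smi})$, we get $\lim_{\delta\to0}\eup_i(v_i,b_i,\pi)=\Exp_{b_{\smi}}[\overline g_{K^\pi(b_{\smi})}(b_{\smi})]$, i.e.\ a single class-average for each $b_{\smi}$.

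Next I would pass from the maximum over $\Pi$ to the average over $\Pi$: since $\euo_i(v_i,b_i)=\max_{\pi}[\cdots]\ge \Exp_{\pi\sim\Unif(\Pi)}[\cdots]$, it suffices to prove $\Exp_{\pi\sim\Unif(\Pi)}\big[\lim_{\delta\to0}\eup_i(v_i,b_i,\pi)\big]\ge \eu_i(v_i,b_i)$. Exchanging the expectations over $\pi$ and $b_{\smi}$ and using symmetry, for each fixed $b_{\smi}$ every non-empty achievable bundle is equally likely to be $K^\pi(b_{\smi})$, so the conditional value of the left-hand side is the unweighted average of the $\overline g_K(b_{\smi})$ over the non-empty achievable bundles. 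When no efficient allocation leaves $i$ empty-handed, this dominates $\sum_K w_K\overline g_K$ simply because the unweighted average is at least $\max_K \overline g_K$ minus nothing — more precisely because any convex combination of the $\overline g_K$ is bounded by their maximum, and with a pointwise choice of $K$ one could even attain the maximum; the symmetrization is what makes a \emph{single} $\pi$ work across all $b_{\smi}$.

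The step I expect to be the main obstacle is twofold. The structural difficulty is precisely that the maximum over $\pi$ sits \emph{outside} the expectation over $b_{\smi}$, so no single permutation is simultaneously optimal for every opponent profile; the symmetrization above is the device that circumvents this, and I would want to make sure the resulting comparison of a uniform average of non-empty class-averages against $\sum_K w_K\overline g_K$ really holds, not just its weaker ``$\max$'' version. The genuinely delicate case is the self-tie where $i$ is indifferent in reported welfare between winning some bundle $K$ and winning nothing: the perturbation directions in $\Pi$ all point ``upward'', so they always break such a tie in favour of winning $K$, whereas random tie-breaking sometimes awards $i$ the empty bundle at payment $0$. Here I would argue that on that face of the allocation polytope the bid $b_i(K)$ coincides with the VCG price of $K$ (and, for core-selecting rules, individual rationality together with the coalitional constraint for $N\setminus\{i\}$ forces the payment to equal $b_i(K)$), so that, invoking free disposal, being pushed into winning $K$ rather than nothing cannot hurt bidder $i$, which closes the comparison and yields the lemma.
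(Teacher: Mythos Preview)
Your symmetrization step has a genuine gap. After averaging over $\pi\in\Pi$, the conditional (on $b_{\smi}$) left-hand side is the \emph{uniform} average of the $\overline g_K(b_{\smi})$ over the non-empty achievable bundles, whereas the conditional $\eu_i$ is the \emph{weighted} average $\sum_K w_K\,\overline g_K$ with $w_K$ proportional to the number of efficient allocations that assign $K$ to $i$. There is no inequality between two different convex combinations of the same numbers. Concretely, if for some $b_{\smi}$ there are two efficient allocations giving $i$ bundle $K_1$ (each with high utility) and one giving $K_2$ (with low utility), then $w_{K_1}=2/3$, $w_{K_2}=1/3$, and the weighted average strictly exceeds the uniform average $\tfrac12(\overline g_{K_1}+\overline g_{K_2})$. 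Your sentence ``the unweighted average is at least $\max_K \overline g_K$ minus nothing'' is the wrong way round (an average is at most the maximum), and the subsequent appeal to ``a pointwise choice of $K$'' is exactly what the outer $\max_\pi$ does \emph{not} permit: symmetrization replaces the pointwise maximum by a uniform choice, which is strictly weaker.

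The paper sidesteps this by inserting an intermediate perturbation direction that is \emph{not} a permutation: the all-ones vector $\mathbf{1}$. Perturbing by $\delta\mathbf{1}$ eliminates the empty bundle from any self-tie but leaves all ties among non-empty bundles intact, so $\lim_{\delta\to 0}\eup_i(v_i,b_i,\mathbf{1})$ is the $w$-weighted average of the $\overline g_K$ over non-empty $K$---the \emph{same} weights as in $\eu_i$, with only the empty-bundle term removed; free disposal is then invoked for the comparison with $\eu_i$. The passage from $\mathbf{1}$ to some $\pi\in\Pi$ is done not by averaging but greedily: whenever two coordinates of the current direction coincide, breaking that tie in favour of either bundle yields two quantities whose average (already integrated over $b_{\smi}$) equals the current value, so at least one of them is weakly larger; iterating produces a strict ordering with $\lim_{\delta\to 0}\eup_i(\cdot,\pi)\geq\lim_{\delta\to 0}\eup_i(\cdot,\mathbf{1})$. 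This pairwise-greedy refinement is what your symmetrization would have to be replaced by. (Separately, your handling of the empty-vs-$K$ tie does not close either: even if the payment equals $b_i(K)$, free disposal only gives $v_i(K)\geq 0$, so the resulting utility $v_i(K)-b_i(K)$ can be negative whenever $b_i(K)>v_i(K)$.)
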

\begin{proof}
    We show that
    $$\euo_i(v_i, b_i) \geq \lim_{\delta \rightarrow 0} \eup_i(v_i, b_i, \mathbf{1}) \geq \eu_i(v_i, b_i).$$
    
    The second inequality follows from free disposal, because bidder $i$ has weakly higher probability of winning each non-empty bundle under the allocation $X(b_i +\delta \mathbf{1}, b_{\smi})$, and the ratio of winning probabilities between non-empty bundles stays the same.
    
    For the first inequality, we show that $\lim_{\delta \rightarrow 0} \eup_i(v_i, b_i, x) \leq \euo_i(v_i, b_i)$ for any vector $x$ in the positive orthant.
    We convert $x$ into a strict preference ordering by changing one coordinate at a time, while weakly increasing the expected utility.
    This proves the claim because $\euo_i$ maximizes over all strict preference orderings.
    Consider two coordinates $x_a = x_b$.
    The allocation $X(b_i + \delta x, b_{\smi})$ has a certain probability of tying between bundles $a$ and $b$. 
    By slightly increasing either $x_a$ or $x_b$, we can weakly increase the expected value, because the expected value under the original $x$ is an average of these two results, and thus can't be strictly better than both.
\end{proof}

\begin{proof}[Proof of Theorem 1]
    Combine Lemmas \ref{lemma:utilbound} and \ref{lemma:utilbound2}.
\end{proof}

}{}

\clearpage
\newpage
\includepdf[pages=-]{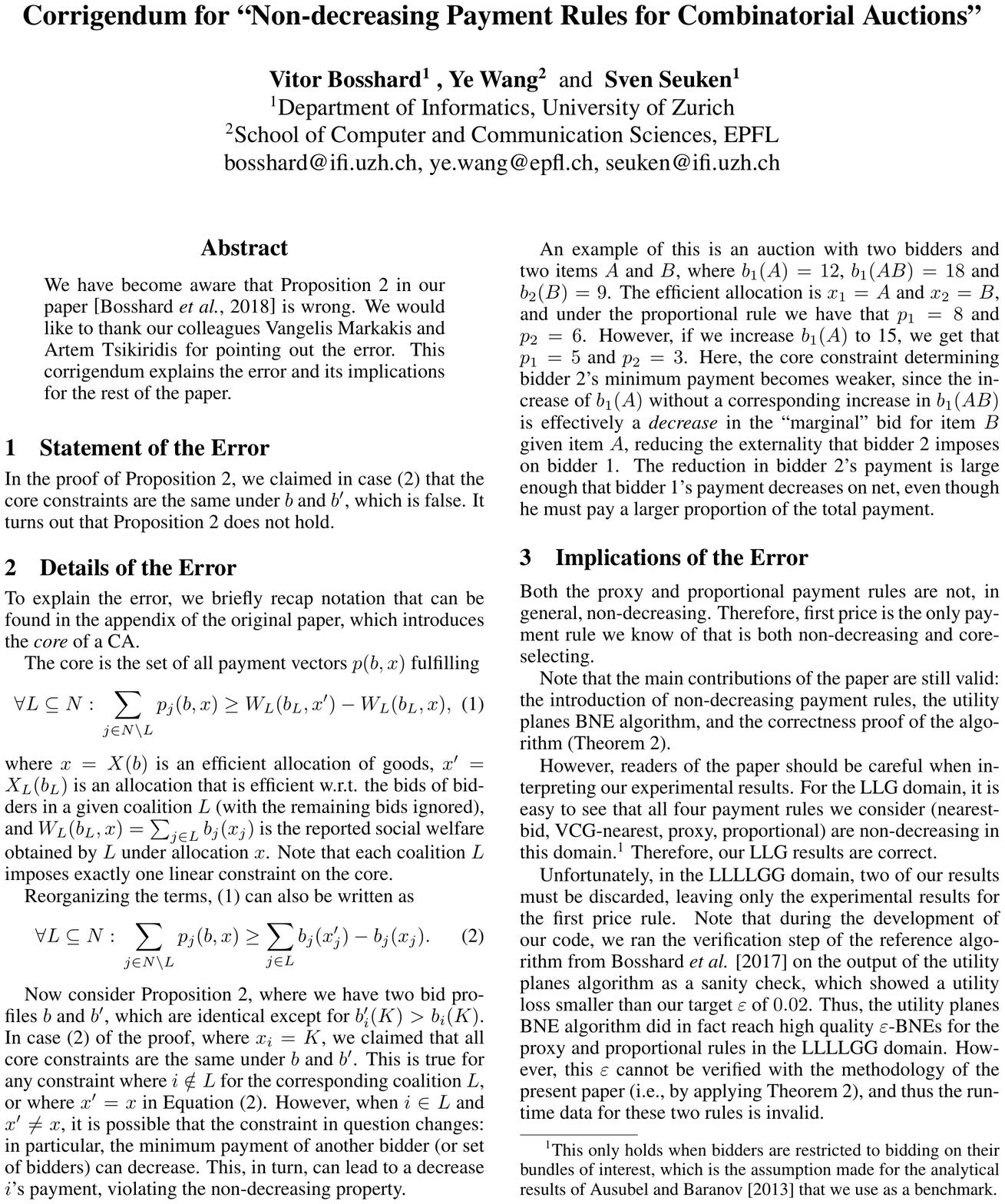}

\end{document}